\theoremstyle{thmstyleone}%
\newtheorem{theorem}{Theorem}
\newtheorem{lem}{Lemma}
\newtheorem{prp}{Proposition}
\newtheorem{asm}{Assumption}
\theoremstyle{thmstyletwo}%
\theoremstyle{thmstylethree}%
\theoremstyle{remark}
\newcommand{\Var}{{\rm Var}}
\newcommand{\E}{\mathbb{E}}
\newcommand{\R}{\mathbb{R}}
\newcommand{\argmin}{\mathop{\rm argmin}\limits}
\newcommand{\ip}[2]{\left\langle #1,\,#2 \right\rangle}
\newcommand{\norm}[1]{\left\lVert #1 \right\rVert}
\newcommand{\coloneqq}{\mathrel{\mathop:}=}
\newcommand{\ind}{\mathbf{1}}
\begin{document}

\title[Sampling from density power divergence-based generalized posterior distribution via stochastic optimization]{Sampling from density power divergence-based generalized posterior distribution via stochastic optimization}


\author*[1]{\fnm{Naruki} \sur{Sonobe}}\email{narukisonobe@gmail.com}

\author[1]{\fnm{Tomotaka} \sur{Momozaki}}\email{t\_momozaki@rs.tus.ac.jp}
\equalcont{These authors contributed equally to this work.}

\author[2,3]{\fnm{Tomoyuki} \sur{Nakagawa}}\email{tomoyuki.nakagawa@meisei-u.ac.jp}
\equalcont{These authors contributed equally to this work.}

\affil*[1]{\orgdiv{Department of Information Sciences}, \orgname{Tokyo University of Science}, \orgaddress{\street{2641 Yamazaki}, \city{Noda-shi}, \postcode{278-8510}, \state{Chiba}, \country{Japan}}}

\affil[2]{\orgdiv{School of Data Science}, \orgname{Meisei University}, \orgaddress{\street{2-1-1 Hodokubo}, \city{Hino}, \postcode{191-8506}, \state{Tokyo}, \country{Japan}}}

\affil[3]{\orgdiv{Statistical Mathematics Unit}, \orgname{RIKEN Center for Brain Science}, \orgaddress{\street{2-1 Hirosawa}, \city{Wako City}, \postcode{351-0198}, \state{Saitama}, \country{Japan}}}


\abstract{Robust Bayesian inference using density power divergence (DPD) has emerged as a promising approach for handling outliers in statistical estimation. Although the DPD-based posterior offers theoretical guarantees of robustness, its practical implementation faces significant computational challenges, particularly for general parametric models with intractable integral terms. These challenges are specifically pronounced in high-dimensional settings, where traditional numerical integration methods are inadequate and computationally expensive. Herein, we propose a novel {approximate} sampling methodology that addresses these limitations by integrating the loss-likelihood bootstrap with a stochastic gradient descent algorithm specifically designed for DPD-based estimation. Our approach enables efficient and scalable sampling from DPD-based posteriors for a broad class of parametric models, including those with intractable integrals. We further extend it to accommodate generalized linear models. Through comprehensive simulation studies, we demonstrate that our method efficiently samples from DPD-based posteriors, offering superior computational scalability compared to conventional methods, specifically in high-dimensional settings. The results also highlight its ability to handle complex parametric models with intractable integral terms.}

\keywords{Density power divergence; Generalized Bayes; Bayesian bootstrap; Stochastic optimization}



\maketitle

\section{Introduction}

Bayesian inference is recognized as a powerful statistical methodology that is used in various fields. However, its standard posterior distribution suffers from a major weakness: it is highly sensitive to outliers, which can significantly distort estimation results. To address this limitation, researchers have proposed robust alternatives to the standard posterior distribution. A key development was introduced by \cite{bissiri2016general}, who established a general framework for updating beliefs using general loss functions instead of likelihoods. { In addition, it is well known that proper scoring rules yield estimators with built-in robustness to outliers \citep{dawid2016score}. A Bayesian formulation based on scoring rules has been developed by \citet{giummole2019score}.}

This framework has led to recent advancements in generalized posterior distributions that incorporate robust loss functions for effective outlier handling. {\citet{hooker2014robust} introduced Bayesian methods based on robust disparities, launching divergence-based robust inference. Building on this line,} \cite{ghosh2016robust} developed a generalized posterior distribution based on the density power divergence (DPD, also known as $\beta$-divergence {or Tsallis Score}) \citep{basu1998robust, eguchi2001robustifing, mihoko2002robust, dawid2016score}, demonstrating its theoretical robustness. \cite{jewson2018principles} further advanced the field by establishing a comprehensive framework for Bayesian inference using general divergence criteria, showing how to achieve robust inference while maintaining Bayesian principles. 
Additionally, \cite{nakagawa2020robust} introduced a robust Bayesian estimation method using $\gamma$-divergence \citep{fujisawa2008robust}, which exhibited superior performance in variance parameter estimation compared to DPD-based methods. Recently, \cite{matsubara2022robust} proposed a robust approach employing a Stein discrepancy, effectively addressing both model misspecification and intractable normalizing constants in likelihood functions.

Within the robust Bayesian inference framework, the DPD-based posterior, introduced by \cite{ghosh2016robust}, has emerged as a particularly influential approach {\citep{jackjewson2024StabilityGeneralBayesian}}. Although alternative methods such as $\gamma$-divergence may offer stronger theoretical properties, the DPD-based posterior's theoretical guarantees for robustness against outliers and { its predictive stability to model's specification} make it a natural first choice when considering robust estimation methods. Notably, its robustness properties can be flexibly controlled through a single tuning parameter, enabling practitioners to optimize the trade-off between efficiency and robustness. The selection of this crucial tuning parameter has been extensively studied \citep{warwick2005tuning, basak2021tuning, yonekura2023adaptation}, thereby complementing the theoretical foundations of the method. 

Various sampling approaches, including the Metropolis-Hastings (MH) algorithms \citep{ghosh2022robust}, importance-sampling techniques \citep{ghosh2016robust, nakagawa2020robust}, and variational Bayesian inference methods \citep{knoblauch2018beta}, have been developed for the DPD-based posterior. These methodological advances have enabled successful applications across various domains, including online changepoint detection \citep{knoblauch2018beta} and differential privacy \citep{jewson2023beta}.

However, sampling from the DPD-based posterior continues to pose significant computational challenges, particularly for general parametric models. This is primarily owing to the presence of intractable integrals or infinite sums in the DPD formulation. Consequently, most studies on the DPD-based posterior have been limited to cases in which the normal distribution serves as a probability model. For general parametric models, although the intractable integral term can be theoretically approximated through numerical integration to enable posterior sampling {\citep{jewson2018principles, jackjewson2024StabilityGeneralBayesian}}, this approach is often inadequate for high-dimensional data \citep{niederreiter1992qmc}. This limitation is especially challenging given that high-dimensional datasets frequently contain outliers, as demonstrated in Outlier Detection Datasets \citep{Rayana:2016}, highlighting the need for robust DPD-based posterior sampling methods that perform well in high-dimensional settings.

To address these challenges, \cite{okuno2024minimizing} developed a novel stochastic gradient descent (SGD) algorithm for computing DPD-based estimators. Their approach significantly broadens the applicability of DPD-based estimation to encompass general parametric models, even in cases where the DPD lacks an explicit analytical form. This stochastic methodology offers substantial computational advantages over traditional gradient descent methods that use numerical integration, as it eliminates the need for precise integral approximations at each iteration. Although their framework guarantees theoretical convergence and maintains flexibility in a stochastic gradient design, it was primarily developed within a frequentist paradigm. 

Additionally, { intractable integrals of powered densities are often handled by framing them as expectations that can be estimated through sampling pseudo-observations. 
This is a central theme in likelihood-free inferences, where the likelihood is unavailable but simulating from the model is feasible. In this domain, \cite{pacchiardi2024general} proposed a generalized posterior distribution based on scoring rules and developed stochastic gradient Markov chain Monte Carlo (MCMC) methods for sampling, which scale effectively to high-dimensional parameter spaces without requiring summary statistics.  
Similarly, \cite{frazier2024loss} developed the first asymptotic theory for Gibbs measures in settings where the loss function must be inferred from pseudo-observations, which is a common challenge in generalized Bayesian and intractable-likelihood contexts.} Their findings illustrate that the accuracy of the resulting Gibbs measure critically depends on the bias and variance of the estimated loss. In particular, they reveal the number of pseudo-observations required to approximate the exact Gibbs measure, bridging an important theoretical gap in stochastic loss estimation. 
{They also propose a piecewise deterministic Markov process sampler for the Gibbs measure. However, these sampling methods need calibrations of a loss scaling parameter in generalized posterior distribution.}

Building on these developments, we propose {an approximate} sampling methodology that reconciles efficient computation with robust Bayesian inference. 
{Our approach combines the loss-likelihood bootstrap \citep[LLB;][]{lyddon2019general} with the SGD algorithm of \citet{okuno2024minimizing}, yielding a sampler applicable to a broad class of parametric models, including cases with intractable integrals. We further extend the method to generalized linear models (GLMs), a widely used class in practice. A key practical advantage is that the LLB-based sampler is readily parallelizable and requires no calibration of a loss scaling parameter $\omega$; in our formulation, the LLB-induced posterior random measure is invariant to $\omega$ and proposes a correct frequentist coverage \citep{lyddon2019general}. A limitation is that incorporating substantive prior information is nontrivial; we return to this point in the discussion and note connections to approaches that inject prior structure into LLB-type procedures \citep{fong2019posterior}. }Through extensive simulations, we demonstrate the computational efficiency of the method, particularly in high-dimensional settings.

The remainder of this paper is organized as follows. Section \ref{sec2} provides an overview of generalized Bayesian inference, introducing the theoretical foundations of the DPD-based posterior and the LLB. Section \ref{sec3} presents our novel methodology, detailing the stochastic gradient-based approach for {approximately} sampling from DPD-based posteriors, and its extension to GLMs. Section \ref{sec4} demonstrates the effectiveness of our proposed method through extensive simulation studies, including validation experiments, comparisons with conventional methods, and applications in intractable models. Finally, Section \ref{sec5} concludes the paper with a discussion of the findings and future research directions.

\section{Generalized Bayesian inference}
\label{sec2}
Here, we examine the methodological advancements in Bayesian inference under model misspecification and data contamination. First, we introduce the generalized posterior distribution, which extends traditional Bayesian methods by incorporating general loss functions and a loss scaling parameter. We then explore the DPD-based posterior as a robust approach for handling outliers, discussing both its theoretical guarantees and computational challenges. Finally, we present the LLB as an efficient method for generating approximate samples from generalized posterior distributions. These foundations serve as the basis for our proposed novel sampling method, which specifically addresses the computational challenges in DPD-based posterior inference.

\subsection{Generalized posterior distribution}
Let $\mathcal{X}\subset\mathbb{R}^d$ and $\Theta\subset\mathbb{R}^p$ denote sample and parameter spaces, respectively.
Suppose that $x_1, \dots, x_n\in \mathcal{X}$ are independent and identically distributed from distribution $F_0$ and $\pi(\theta)$ is a prior probability density function. 
Let $f(\cdot\mid\theta)$ ($\theta \in \Theta$) denote the statistical model on $\mathcal{X}$. 
While Bayesian inference provides methods to quantify the uncertainty in $\theta$, the true data-generating process $F_0$ is often unknown, and the assumed model may be misspecified.

To address this challenge, \cite{bissiri2016general} proposed a general framework for updating belief distributions using the generalized Bayesian inference. This framework enables quantification of the uncertainty of general parameter $\theta$, which has true value $\theta_0\coloneqq \argmin_{\theta\in\Theta}\int \ell(\theta, x)dF_0(x)$, where $\ell(\theta, x)$ is a general loss function. The generalized posterior distribution takes the form:
\[
\pi(\theta\mid x_{1:n})\propto\pi(\theta)\exp\{-\omega \ell(\theta, x_{1:n})\},
\]
where the loss function has an additive property, $\ell(\theta, x_{1:n})=\sum_{i=1}^n \ell(\theta, x_i)$ (here and throughout, we use the notation $x_{1:n}$ to denote $(x_1, \dots, x_n)$) and $\omega > 0$ is a loss-scaling parameter that controls the balance between the prior belief and the information from the data. If we consider $\omega = 1$ and $\ell(\theta, x) = -\log f(x\mid\theta)$, then we find that $\pi(\theta\mid x)\propto \pi(\theta)f(x\mid \theta)$, which shows that the generalized posterior distribution includes the standard posterior distribution as a special case. The selection of the loss scaling parameter $\omega$ has been extensively studied because it affects the posterior concentration and validity of the Bayesian inference \citep{petergrunwald2017InconsistencyBayesianInference, holmes2017Assigningvaluepower, lyddon2019general, syring2019Calibratinggeneralposterior, pei-shienwu2023ComparisonLearningRate}. 
{Additionally, foundational asymptotic theory for generalized posterior distributions (including concentration, Bernstein-von Mises normality, Laplace approximation, and frequentist coverage) can be found in \cite{miller2021general}.}

To achieve a robust Bayesian inference against outliers, we can construct a loss function using robust divergence measures. For example, using DPD or $\gamma$-divergence \citep{fujisawa2008robust} instead of the Kullback-Leibler divergence leads to inference procedures that are less sensitive to outliers while maintaining the theoretical guarantees of the generalized Bayesian inference \citep{ghosh2016robust, nakagawa2020robust}. These robust divergences provide principled ways to address the model misspecification and contaminated data while preserving the coherent updating property of Bayesian inference \citep{jewson2018principles}. 

\subsection{DPD-based posterior}
The DPD offers an attractive approach because of its robustness. The DPD-based loss function \citep{basu1998robust} is defined for observations $x_{1:n}$ as follows:
\begin{align}
    Q_n(\theta)&=\sum_{i=1}^n q(\theta, x_i),\notag\\
    q(\theta,x_i)&=-\frac{f(x_i\mid\theta)^\alpha}{\alpha}+\frac{1}{1+\alpha}\int_{\mathcal{X}}f(x\mid\theta)^\alpha dF_{\theta} (x),\label{eq:dpd_loss}
\end{align}
where $\alpha>0$ is a tuning parameter and $F_{\theta}$ is the distribution corresponding to $f(\cdot\mid\theta)$. The DPD-based posterior \citep{ghosh2016robust} provides a robust framework for updating prior distributions. The DPD-based posterior is defined as
\begin{align*}
    \pi_{\text{DPD}}(\theta\mid x_{1:n}) &\propto \pi(\theta)\exp\left\{-\omega Q_n(\theta)\right\}\\
    &=\pi(\theta)\exp\left\{-\omega \sum_{i=1}^n q(\theta, x_i)\right\}.
\end{align*} The DPD-based posterior inherits robust properties from the DPD loss function. Unlike traditional likelihood-based approaches, it is less sensitive to outliers and model misspecification. The robustness of the DPD-based posterior has been theoretically established through the analysis of influence functions \citep{ghosh2016robust}. The bounded influence function ensures protection against outliers and model misspecification while maintaining good efficiency under the true model. {Also, posterior predictive distributions via DPD-based posterior has predictive stability to model's specification better than standard posterior predictive distribution \citep{jackjewson2024StabilityGeneralBayesian}.}

{Tuning parameter $\alpha$ controls the trade-off between efficiency and robustness: larger $\alpha$ yields greater robustness at the cost of statistical efficiency under correct specification. Methods for choosing $\alpha$ have been developed in the frequentist DPD literature \citep{warwick2005tuning, basak2021tuning} and in DPD-based posteriors; in particular, \citet{yonekura2023adaptation} proposed a data‐adaptive selection of $\alpha$ via Hyv\"arinen score. Together, these works offer theoretical and practical guidance for setting $\alpha$ according to data characteristics and application context.}

The DPD-based posterior is highly desirable because it is robust against outliers. However, the second term of the loss function \eqref{eq:dpd_loss} based on DPD contains an integral term (infinite sum) that cannot be written explicitly, except for specific distributions, such as the normal distribution, making it difficult to calculate the posterior distribution using MCMC. Although numerical integration can be used to approximate the calculations in general probability models, it often fails for high-dimensional data \citep{niederreiter1992qmc}. To address these computational challenges, we propose a new {approximate} sampling method that avoids the need for an explicit calculation or numerical approximation of the integral term. Before presenting our proposed approach, we first review the LLB introduced by \cite{lyddon2019general}, which provides an important theoretical foundation for our sampling algorithm. The LLB offers a nonparametric method for approximating posterior distributions by resampling; however, it has limitations when applied directly to DPD-based inference. We then build on these ideas to develop an efficient sampling scheme specifically designed for DPD-based posteriors using general parametric models.
\subsection{Loss-likelihood bootstrap}
{
Motivated by the need to retain uncertainty quantification without resorting to costly MCMC, and especially because generalized posterior distributions induced by non-likelihood losses often exhibit complex geometry (e.g., non-log-concavity and multimodality) that makes MCMC delicate, we begin by revisiting the weighted likelihood bootstrap \citep[WLB;][]{newton1994approximate} as an optimization-based, trivially parallelizable surrogate for likelihood-based posterior sampling.
In the WLB, draw Dirichlet weights $n^{-1}(w_1,\ldots,w_n)^\top\sim\mathrm{Dir}(1,\ldots,1)$ and form the random empirical measure $F=\sum_{i=1}^n w_i\delta_{x_i}$, where $\delta_{x_i}$ denotes the unit point mass at $x_i$. Propagate this randomness through the self-information loss $\ell_{\mathrm{si}}(\theta,x)=-\log f(x\mid\theta)$ by computing 
$\theta(F)=\argmin\int \ell_{{\mathrm{si}}}(\theta,x)dF(x).$
Thus, uncertainty about the unknown $F_0$ is represented by Dirichlet perturbations on the empirical support, and uncertainty about $\theta$ is obtained by solving a deterministic optimization problem for each random reweighting rather than by running a Markov chain. \cite{lyddon2018nonpara} also extends this methodology to nonparametric Bayesian learning called as posterior bootstrap. This viewpoint provides a seamless bridge from likelihood-based Bayesian updating to the generalized Bayes setting: replacing $\ell_{\mathrm{si}}$ in the WLB objective with an arbitrary additive loss $\ell$ yields LLB \citep{lyddon2019general}, which preserves the computational virtues of WLB (independent, easily parallelized draws) while enabling the calibrated generalized posterior distribution through information matching.
}

LLB \citep{lyddon2019general} provides a computationally efficient method for generating approximate samples from generalized posterior distributions by leveraging Bayesian nonparametric principles. The key insight is to model the uncertainty of the unknown true data-generating distribution using random weights in the empirical distribution. The algorithm proceeds as follows:

\begin{algorithm}[H]
\caption{loss-likelihood bootstrap}
\label{alg:llb}
\begin{algorithmic}[1]
\For{$s = 1,\ldots,S$}
    \State Generate Dirichlet weights $(w_{s1},\ldots,w_{sn})^{\top} \sim \text{Dir}(1,\ldots,1)$
    \State Set empirical measure $F^{(s)} = \sum_{i=1}^n w_{si}\delta_{x_i}$ 
    \State Compute $\theta^{(s)} = \argmin_{\theta}\int \ell(\theta,x)dF^{(s)}(x)$
\EndFor
\Return ${\{}\theta^{(1)},\ldots,\theta^{(S)}{\}}$
\end{algorithmic}
\end{algorithm}

We characterize the asymptotic properties of the LLB samples using Theorem \ref{thm1} \citep{lyddon2019general}. This result indicates that the LLB samples exhibit {proper uncertainty quantification asymptotically, in the sense that the resulting credible intervals achieve correct frequentist coverage,} with sandwich covariance matrix $J(\theta_0)^{-1}I(\theta_0)J(\theta_0)^{-1}$ providing robustness to model misspecification.

\begin{theorem}[\cite{lyddon2019general}]
\label{thm1}
Let $\tilde\theta_{n}$ be the proposed LLB sample of parameter $\theta_0$ with loss function $\ell(\theta, \cdot)$ given $n$ observations $x_{1:n}$, and let $P_{\text{LLB}}$ be its probability measure. Under suitable regularity conditions, for any Borel set $A \subset \mathbb{R}^p$, as $n \to \infty$:
$$P_{\text{LLB}}(n^{1/2}(\tilde{\theta}_n - \hat{\theta}_n) \in A \mid x_{1:n}) \stackrel{a.s.}{\longrightarrow} P(z \in A),$$
where $\hat{\theta}_n = \argmin_{\theta} n^{-1}\sum_{i=1}^n \ell(\theta,x_i)$ is the empirical risk minimizer, $z$ follows normal distribution $N_p(0, J(\theta_0)^{-1}I(\theta_0)J(\theta_0)^{-1})$ with $I(\theta) = \int \nabla_\theta\ell(\theta,x)\nabla_\theta\ell(\theta,x)^\top dF_0(x)$ and $J(\theta) = \int \nabla_\theta^2\ell(\theta, x) dF_0(x)$, and $\nabla_\theta$ denotes the gradient operator with respect to $\theta$.
\end{theorem}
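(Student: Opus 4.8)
The argument is a Bayesian-bootstrap analogue of the classical $M$-estimation expansion, carried out conditionally on the data $x_{1:n}$. Write the weighted empirical risk as $R_n(\theta)=\sum_{i=1}^n w_i\,\ell(\theta,x_i)$, so that $\tilde\theta_n=\argmin_\theta R_n(\theta)$, and recall that $\hat\theta_n$ solves the unweighted stationarity equation $\sum_{i=1}^n\nabla_\theta\ell(\hat\theta_n,x_i)=0$. I will use the standard representation $w_i=e_i/T$ with $e_1,\dots,e_n\iid\mathrm{Exp}(1)$ and $T=\sum_{j=1}^n e_j$, so that $T/n\to1$ almost surely and the bootstrap randomness is independent of the data. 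The plan is: (i) establish consistency of $\tilde\theta_n$ for $\theta_0$, conditionally on the data and for almost every data sequence; (ii) Taylor-expand the weighted stationarity equation around $\hat\theta_n$; and (iii) identify the conditional limit law of the leading term via a conditional central limit theorem for the Dirichlet weights.

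For (i), a uniform law of large numbers for $\theta\mapsto\sum_i w_i\,\ell(\theta,x_i)$ over a neighborhood of $\theta_0$ (using $\sum_i w_i=1$, the uniform smallness of the $w_i$, and the regularity conditions), together with the usual argmin-continuity argument, gives $\tilde\theta_n\to\theta_0$ in probability over the weights, for a.e.\ data sequence. For (ii), expanding $0=\nabla_\theta R_n(\tilde\theta_n)$ about $\hat\theta_n$ yields
\[
n^{1/2}(\tilde\theta_n-\hat\theta_n)=-\Big(\textstyle\sum_{i=1}^n w_i\,\nabla_\theta^2\ell(\bar\theta_n,x_i)\Big)^{-1}\,n^{1/2}\sum_{i=1}^n w_i\,\nabla_\theta\ell(\hat\theta_n,x_i),
\]
with $\bar\theta_n$ on the segment joining $\hat\theta_n$ and $\tilde\theta_n$; the Hessian factor converges (conditionally, for a.e.\ data sequence) to $J(\theta_0)$ by consistency, a ULLN for $\nabla_\theta^2\ell$, and the strong law applied to $n^{-1}\sum_i e_i\,\nabla_\theta^2\ell(\theta_0,x_i)$, while the Taylor remainder is controlled by stochastic equicontinuity of the weighted empirical process near $\theta_0$ and nonsingularity of $J(\theta_0)$.

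The heart of the proof is (iii). Since $\sum_{i=1}^n\nabla_\theta\ell(\hat\theta_n,x_i)=0$, setting $g_i:=\nabla_\theta\ell(\hat\theta_n,x_i)$ and writing $w_i=e_i/T$ we may replace $e_i$ by $e_i-1$, obtaining
\[
n^{1/2}\sum_{i=1}^n w_i\,g_i=\frac{n}{T}\cdot\frac{1}{n^{1/2}}\sum_{i=1}^n (e_i-1)\,g_i .
\]
Condition on the data: the $g_i$ are then constants, the $e_i-1$ are i.i.d.\ with mean $0$, variance $1$, and all moments finite, and $n^{-1}\sum_i g_i g_i^{\top}\to I(\theta_0)$ almost surely by the strong law together with consistency of $\hat\theta_n$ and a ULLN for $\nabla_\theta\ell\,\nabla_\theta\ell^{\top}$. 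A Lindeberg--Feller central limit theorem for the triangular array $\{(e_i-1)g_i\}_{i\le n}$, valid for almost every data sequence, then gives $n^{-1/2}\sum_i(e_i-1)g_i\Rightarrow N_p(0,I(\theta_0))$ conditionally on the data; combining this with $n/T\to1$ and the Hessian limit through Slutsky's theorem and the Cram\'er--Wold device delivers the claimed sandwich limit $N_p\big(0,J(\theta_0)^{-1}I(\theta_0)J(\theta_0)^{-1}\big)$.

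The main obstacle is keeping the two layers of randomness coherent: the conclusion is an almost-sure statement over the data about a conditional law over the bootstrap weights, so one must produce a single probability-one data event on which $\hat\theta_n\to\theta_0$, the strong laws for $n^{-1}\sum_i g_ig_i^{\top}$ and for the weighted Hessian hold, and the conditional Lindeberg condition is met --- the last requiring a uniform moment bound such as $\limsup_n n^{-1}\sum_i\norm{g_i}^{2+\delta}<\infty$ a.s., which is exactly what the regularity conditions on $\ell$ and $F_0$ (dominating functions for the first two derivatives, finiteness of the relevant $F_0$-moments) are there to guarantee. A secondary technicality is making the Taylor remainder uniformly negligible in the weights, where the smoothness of $\ell$ and an integrable envelope for $\nabla_\theta^2\ell$ enter.
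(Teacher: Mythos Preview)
The paper does not supply its own proof of this theorem: it is quoted as a result of \cite{lyddon2019general} and used as background for the loss-likelihood bootstrap, so there is no in-paper argument to compare against. Your outline is the standard route to such bootstrap CLTs (exponential representation of Dirichlet weights, Taylor expansion of the weighted score around $\hat\theta_n$, centering via $\sum_i g_i=0$, and a conditional Lindeberg--Feller CLT) and matches in spirit the original derivation in \cite{lyddon2019general}; I see no substantive gap in the strategy you describe.
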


The LLB approach offers significant computational advantages over the traditional MCMC methods. {By generating independent samples and enabling trivial parallelization, it does not require MCMC convergence diagnostics such as effective sample size or R-hat \citep{lyddon2019general}.
However, a known limitation is that it is difficult to incorporate prior information directly into the LLB framework. Recent work by \cite{fong2019posterior} addresses this by introducing a Dirichlet process prior on the sampling distribution and by allowing a loss-based penalization to encode prior beliefs while retaining the computational advantages of the posterior bootstrap. Additionally, \citet{newton2021weighted} incorporate prior information into the weighted Bayesian bootstrap by adding a log-prior penalty to the randomized objective function.
}

Moreover, the LLB plays a crucial role in calibrating the loss scale parameter for the generalized posterior distributions. Under regularity conditions, both the generalized posterior and LLB posterior distributions converge to normal distributions with identical means at empirical risk minimizer $\hat{\theta}_n$ but different covariance structures: $w^{-1}J(\theta_0)^{-1}$ for the generalized posterior and $J(\theta_0)^{-1}I(\theta_0)J(\theta_0)^{-1}$ for the LLB posterior. By equating the Fisher information numbers \citep{walker2016bayes}, defined as $K(p) = \text{tr}(\Sigma^{-1})$ for a normal distribution with covariance $\Sigma$, they derive the calibration formula $w = \text{tr}\{J(\theta_0)I(\theta_0)^{-1}J(\theta_0)^\top\}/\text{tr}\{J(\theta_0)\}$. This calibration exhibits a remarkable theoretical property: when the true data-generating distribution belongs to the model family up to a scale factor (i.e., $f_0(x) = \exp\{-w_0\ell(\theta_0,x)\}$ for some $\theta_0 \in \Theta$ and $w_0 > 0$), it recovers the correct scale $w = w_0$. In practice, because $\theta_0$ is unknown, the matrices $I$ and $J$ are estimated empirically using the risk minimizer $\hat{\theta}_n$, resulting in a computationally tractable approach for calibrating the generalized posterior that ensures equivalent asymptotic information content to the LLB posterior. This theoretical result, established by \cite{lyddon2019general}, provides further justification for viewing the LLB as an approximate sampling method from a well-calibrated generalized posterior distribution.

\section{Methodology}
\label{sec3}
It is crucial for robust Bayesian inference to sample from DPD-based posterior for a general parametric model; however, its implementation has been computationally challenging owing to the integral terms involved in the DPD calculation. Although traditional approaches have been limited to specific probability models where explicit solutions are available, recent advances in stochastic optimization techniques have opened new possibilities for efficient sampling methods. Here, we propose a novel sampling algorithm that leverages these advances to enable efficient posterior sampling. Our approach reformulates the sampling procedure using stochastic gradient techniques, making it computationally tractable for a broad class of statistical models, including GLMs.
\subsection{New sampling algorithm from DPD-based posterior}
Recently, \cite{okuno2024minimizing} achieved a significant breakthrough in minimizing the DPD for general parametric density models by introducing a stochastic optimization approach. Whereas traditional DPD minimization requires  the computation of integral terms that can only be explicitly derived for specific distributions, such as normal and exponential distributions, their stochastic optimization framework enables DPD minimization for a much broader class of probability density models.

Inspired by this stochastic approach to minimize DPD, we propose reformulating the LLB algorithm {\citep{lyddon2019general, dellaporta2022robust}} as a stochastic method. While the LLB was traditionally formulated as an optimization problem based on weighted loss-likelihoods, the introduction of stochastic optimization techniques enables the construction of a more computationally efficient algorithm.

The objective function, $L_w(\theta)$, that we minimize in the LLB algorithm, is 
\begin{align*}
    &\mathcal{L}_{w}(\theta)=\sum_{i=1}^n w_i q(\theta, x_i),\label{eq2}\\
    &(w_1,\dots, w_n)^{\top} \sim \text{Dir}(1, \dots, 1). 
\end{align*}
We also calculate the gradient of the objective function as
\begin{align*}
    \nabla_\theta\mathcal{L}_w(\theta)&=\nabla_\theta\Biggr[\sum_{i=1}^n w_i q(\theta, x_i)\Biggr] \\
    &=-\sum_{i=1}^n w_i f(x_i\mid\theta)^\alpha u(x_i\mid\theta) 
    +E_{ F_\theta}\Bigr[f(X\mid\theta)^\alpha u(X\mid\theta)\Bigr],
\end{align*}
where $ u(\cdot\mid\theta)=\nabla_\theta\log f(\cdot\mid\theta)$. 
We then randomly generate independent samples $\xi^{(t)}={\{}z_1^{(t)},\dots, z_m^{(t)}{\}}$ from $F_{\theta^{(t)}}$ and define the stochastic gradient of step $t$ as
\begin{equation}
  g(\theta^{(t)}\mid\xi ^{(t)})=-\sum_{i=1}^n w_i f(x_i\mid\theta^{(t)})^\alpha u(x_i\mid\theta^{(t)})+\frac{1}{m}\sum_{j=1}^m f(z_j^{(t)}\mid\theta^{(t)})^\alpha u(z_j^{(t)}\mid\theta^{(t)}).
  \label{eq3}
\end{equation}
Let $v > 0$ and assume that the objective function $\mathcal{L}_w$ satisfies the following conditions:
(i) $\mathcal{L}_w(\theta)$ is differentiable over $\Theta$
(ii) There exists $L > 0$ such that for any $\theta, \theta' \in \Theta$,
$\|\nabla_\theta \mathcal{L}_w(\theta) - \nabla_\theta \mathcal{L}_w(\theta')\| \leq L\|\theta - \theta'\|$
(iii) $\E_{\xi^{(t)}}[{g(\theta^{(t)}\mid\xi^{(t)})}] = \nabla_\theta\mathcal{L}_w(\theta^{(t)})$ for all $t = 1,\ldots,T$
(iv) ${\E_{\xi^{(t)}}[\|g(\theta^{(t)}\mid\xi^{(t)})-\nabla_\theta\mathcal{L}_w(\theta^{(t)})\|^2]} \leq v$ for all $t = 1,\ldots,T$.
Under these conditions, as $T \to \infty$, the SGD algorithm converges to minimize the objective function $\mathcal{L}_w(\theta)$ in probability \citep{ghadimi2013sgd, okuno2024minimizing}. 
{Furthermore, in the Supplementary Materials, for regular exponential families under mild compactness and envelope conditions, assumptions (ii)–(iv) are satisfied.
}
The parameter $\theta^{(t)}$ was iteratively updated using the SGD method as follows: 
\begin{equation*}
\theta^{(t)}\leftarrow\theta^{(t-1)}-\eta_{t-1}g(\theta^{(t-1)}\mid\xi^{(t-1)}) ~ (t=1, \dots, T).
\end{equation*}
{We assume the learning rates $\{\eta_t\}_{t=1}^T$ satisfy the following conditions \citep{ghadimi2013sgd, okuno2024minimizing}:}
\begin{equation*}
\eta_t < 2/L, ~\sum_{t=1}^T\eta_t = \infty, ~\Biggr\{\sum_{t=1}^T\eta_t\Biggr\}^{-1}\Biggr\{\sum_{t=1}^T\eta_t^2\Biggr\}=0, ~T\rightarrow\infty.
\end{equation*} 

{Here, the number of pseudo-samples $m$ does not theoretically affect minimization of the randomized objective function $L_w$, even when $m=1$; \citep[see][]{ghadimi2013sgd, okuno2024minimizing}. In our simulation studies reported in the Supplementary Materials, the proposed method exhibited stable performance across a wide range of $m$. For numerical stability in practice, we recommend setting $m$ more than $10$.
}

Based on these theoretical foundations, we implement our approach using the SGD algorithm that offers two key advantages. First, it provides a unified framework that enables sampling from a broad class of parametric probability models, thereby overcoming the traditional limitations of explicit DPD computations. This generality renders the method applicable across diverse statistical scenarios without requiring model-specific derivations. Second, as demonstrated in subsequent simulation studies, the algorithm exhibits substantial computational efficiency even in high-dimensional settings, where traditional sampling methods often become intractable. 
{We emphasize that the procedure delivers approximate sampling rather than exact draws from a DPD-based posterior. In particular, for finite observation size $n$, the LLB samples can be understood as an estimator of the sampling distribution of the minimum DPD estimator.}
We formalize this approach in Algorithm \ref{alg:proposed}, which details the implementation of our stochastic gradient-based sampling procedure.

\begin{algorithm}[H]
\caption{LLB with SGD}
\label{alg:proposed}
\begin{algorithmic}[1]
\Require Data points $x_1,\ldots,x_n$, initial parameter $\theta_{\text{init}}$, number of bootstrap samples $S$, maximum iterations $T$, initial learning rate $\eta_{\text{init}}$
\Ensure Bootstrap samples $\theta_1,\ldots,\theta_S$

\For{$s=1,\dots,S$}
   \State $(w_1,\ldots,w_n)^{\top} \sim \text{Dir}(1,\ldots,1)$ 
   \State $\theta \leftarrow \theta_{\text{init}}, ~ \eta \leftarrow \eta_{\text{init}}$
   
   \For{$t=1,\dots,T$}
       \State Generate $m$ random samples {$z_1^{(t)},\ldots,z_m^{(t)}$ from $F_{\theta^{(t)}}$}
       
       \State {$\theta^{(t)} \leftarrow \theta^{(t-1)} - \eta_{t-1} g(\theta^{(t-1)}\mid \xi^{(t-1)})$}
       \State Periodically update learning rate $\eta_{t}$
   \EndFor
   \State {$\theta_s \leftarrow \theta^{(T)}$ }
\EndFor
\Return ${\{}\theta_1,\dots,\theta_S{\}}$
\end{algorithmic}
\end{algorithm}

\subsection{Sampling from DPD-based posterior in GLMs}
Building on the stochastic gradient-based sampling framework introduced in the previous section, we {extend} our methodology to include GLMs. {GLMs entail observation-specific expectation/integral terms that depend on the predictors $x_i$, so one must approximate $n$ distinct integrals/sums, one per observation, at each iteration.}
As in the previous section, our approach minimizes the objective function through SGD, and all theoretical properties, including the convergence conditions (i)–(iv), remain identical. This theoretical consistency allows us to leverage the same optimization framework {while explicitly handling the observation-specific integral structure required by GLMs}.

In GLMs, the random variables $y_i~(i=1, \dots, n)$ are independent and follow a general exponential family of distributions with a density
\begin{equation*}
    f(y_i\mid\theta_i, \phi) = \exp\Biggr\{\frac{y_i\theta_i-b(\theta_i)}{a(\phi)}+c(y_i,\phi)\Biggr\},
\end{equation*}
where canonical parameter $\theta_i$ is a measure of the location depending on fixed predictor $x_i$ and $\phi$ is the scale parameter. Let $\mu_i=E[Y_i\mid x_i, \beta]$; thus, the GLM satisfies $h(\mu_i)=x_i^\top \beta$, where $h(\cdot)$ is a monotone and differentiable function.

Applying our stochastic gradient-based approach to GLMs, we consider Bayesian inference via DPD for the regression coefficients and scale parameters. Let $\theta = (\beta^{\top}, \phi)^{\top}$ and let $\alpha > 0$ be a hyperparameter, we can obtain the following loss function:
\begin{align*}
    Q_n(\theta) &= \sum_{i=1}^n q(\theta,(y_i, x_i)),\\
    q(\theta, (y_i, x_i)) &= -\frac{1}{\alpha}f(y_i\mid x_i,\theta)^\alpha+\frac{1}{1+\alpha}\int f(y\mid x_i, \theta)^{\alpha} dF_{i{, \theta }}(y).\label{eq:dpd_loss_glm}
\end{align*}

{Because $F_{i,\theta}$ depends on $x_i$, the integral in $q(\cdot)$ differs for each $i$, so we must approximate $n$ distinct expectation terms. Moreover, in the weighted objective $\mathcal{L}_w(\theta)=\sum_{i=1}^n w_i q(\theta,(y_i,x_i))$, the same bootstrap weight $w_i$ multiplies both the data-fit term and the integral/expectation term for each observation, which is a subtle but important point that is not required when a single shared integral is used.} 
Following the same principles as our general framework, we derive the stochastic gradient estimator for GLMs. The gradient of the objective function takes the following form:
\begin{align*}
    \nabla_\theta\mathcal{L}_w(\theta)&=\nabla_\theta\Biggr[\sum_{i=1}^n w_i q(\theta, (y_i, x_i))\Biggr] \\
    &=\sum_{i=1}^n w_i\Biggr[-f(y_i\mid x_i, \theta)^\alpha u(y_i\mid x_i, \theta) 
    +E_{F_{i{, \theta }}}\Bigr[f(Y\mid x_i, \theta)^\alpha u(Y\mid x_i, \theta)\Bigr]\Biggr],
\end{align*}
where $u(\cdot\mid x_i, \theta)=\nabla_\theta \log f(\cdot\mid x_i, \theta)$ and $F_{i{, \theta }}$ is the distribution corresponding to $f(y_i\mid x_i, \theta)$. We randomly generate independent samples $\xi_i^{(t)}={\{}z_{i1}^{{(t)}}, \dots, z_{im}^{{(t)}}{\}}$ from distributions $F_{i{, \theta }}$ and $\xi^{(t)}={\{}\xi_1^{(t)}, \dots, \xi_n^{(t)}{\}}$. We can obtain the gradient of the objective function such that
\begin{align*}
      g(\theta^{(t)}\mid\xi^{(t)})&=\sum_{i=1}^n g(\theta^{(t)}\mid\xi_i^{(t)}),\\
      g(\theta^{(t)}\mid\xi_i^{(t)}) &= w_i\Bigr\{-f(y_i\mid x_i, \theta^{(t)})^\alpha u(y_i\mid x_i,\theta^{(t)})+\frac{1}{m}\sum_{j=1}^m f(z_{ij}\mid x_i,\theta^{(t)})^{\alpha}u(z_{ij}\mid x_i,\theta^{(t)})\Bigr\}.
\end{align*}

\begin{algorithm}[H]
\caption{LLB with SGD for GLMs}
\label{alg:proposed-glm}
\begin{algorithmic}[1]
\Require Data points $(y_1,  x_1),\ldots,(y_n,  x_n)$, initial parameter $\theta_{\text{init}}$, number of bootstrap samples $S$, maximum iterations $T$, initial learning rate $\eta_{\text{init}}$
\Ensure Bootstrap samples $\theta_1,\ldots,\theta_S$

\For{$s=1,\dots,S$}
   \State $(w_1,\ldots,w_n)^{\top} \sim \text{Dir}(1,\ldots,1)$ 
   \State $\theta \leftarrow \theta_{\text{init}},~\eta \leftarrow \eta_{\text{init}}$
   
   \For{$t=1,\dots,T$}
       \For{$i=1,\dots,n$}
           \State Generate samples $z_{i1}^{{(t-1)}},\ldots,z_{im}^{{(t-1)}}$ from $F_{i, \theta^{(t-1)}}$
           \State Compute component gradient: {$g(\theta^{(t-1)}\mid\xi_i^{(t-1)})$} 
       \EndFor
       \State Compute full gradient: ${g(\theta^{(t-1)}\mid\xi^{(t-1)})} \leftarrow \sum_{i=1}^n {g(\theta^{(t-1)}\mid\xi_i^{(t-1)})}$
       \State {$\theta^{(t)} \leftarrow \theta^{(t-1)} - \eta_{t-1} g(\theta^{(t-1)}\mid \xi^{(t-1)})$}
       \State Periodically update learning rate $\eta_{t}$
   \EndFor
   \State {$\theta_s \leftarrow \theta^{(T)}$}
\EndFor
\Return ${\{}\theta_1,\dots,\theta_S{\}}$
\end{algorithmic}
\end{algorithm}

{
This work extends our DPD-based posterior sampling framework to GLMs. In this context, the gradient calculation requires handling $n$ observation-specific expectation terms, and the procedural structure necessitates that each of these $n$ terms be individually approximated. Our primary contribution lies in the method of this approximation: instead of employing deterministic methods like numerical integration for each term \citep{jewson2018principles, jackjewson2024StabilityGeneralBayesian}, we construct a computationally efficient, stochastic unbiased estimator for each expectation using pseudo-samples. This specific substitution is what enables efficient and stable posterior sampling for various GLMs, including Poisson regression. Algorithm \ref{alg:proposed-glm} provides the implementation details for this procedure, specifying the loop through the $n$ observations and, for each one, detailing the step of forming the stochastic gradient component from pseudo-samples. This approach preserves the core iterative structure presented in Algorithm \ref{alg:proposed}.
}

\section{Simulation studies}
\label{sec4}
Here, we first verify that our proposed method accurately samples from the DPD-based posterior and benchmark it against conventional alternatives to assess accuracy and efficiency. 
{We report a sensitivity analysis of the robustness parameter $\alpha$ in the Supplementary Materials (Section C), showing that our qualitative conclusions are stable over a reasonable range of $\alpha$; accordingly, we fix $\alpha=0.5$ in all subsequent simulations. We also provide in the Supplementary Materials (Section D) a direct comparison with the standard Bayesian posterior under $\varepsilon$-contamination, as well as a two-dimensional case study in which LLB-based sampling is compared with MCMC for the same DPD-based posterior.}
The R code used to reproduce the experimental results is available at \url{https://github.com/naruki-sonobe/DPD-Bayes-SGD}, and additional simulation results are provided in the Supplementary Materials {(Sections B-C)}.
\subsection{Validation of the proposed method through simulations}
We evaluated our proposed LLB algorithm, which replaces the gradient of the objective function with an unbiased stochastic gradient, through two comprehensive numerical experiments. For our evaluation dataset, we generated $1,000$ samples from a normal distribution $\text{N}(0, 1)$ and introduced outliers from $\text{N}(10, 0.01)$ at a contamination ratio of $0.05$. We assumed a normal distribution with parameters $(\mu, \sigma)$ as the probability model. This setting is particularly advantageous because it allows us to compute the integral term of \eqref{eq:dpd_loss} explicitly as $(2\pi)^{-\frac{\alpha}{2}}(1+\alpha)^{-\frac{3}{2}}\sigma^{-\alpha}$, enabling us to express the DPD-based posterior in closed form. This tractability facilitates the application of various sampling methods, including MCMC with the MH algorithm and the standard LLB algorithm, providing an ideal framework for comparative analysis. {Here, the standard LLB algorithm is a sampling method based on Algorithm \ref{alg:llb} that leverages analytically derived gradients.}
{In this section, we fix the number of pseudo-samples in the proposed method at $m=100$. For the MCMC implementation, the loss scaling parameter of the DPD-based posterior is calibrated via information matching \citep{lyddon2019general}.}

{In our first experiment, the stochastic gradient closely approximates the true gradient of the objective function in the LLB algorithm. Across 20 Monte Carlo experiments (10,000 posterior draws per run), Figure~\ref{fig_exact_LLB} overlays the 20 kernel density estimate (KDE) curves for both methods. These results demonstrate that the performance of the LLB algorithm remains consistent regardless of the stochastic gradient implementation. Consistently, Table~\ref{tab:avg_post} show that LLB and LLB with SGD yield nearly identical means and variances for both $\mu$ and $\sigma$, corroborating this agreement.
 
Our second experiment evaluates whether our method accurately samples from the DPD-based posterior by comparing it with Metropolis-Hastings (MH) and Hamiltonian Monte Carlo (HMC, via Stan \citep{carpenter2017stan}). Figure~\ref{fig_mcmc} overlays the 20 KDE curves for each method (10,000 draws per run). The near-coincidence of these distributions confirms that our method samples from the target DPD-based posterior. Likewise, Table~\ref{tab:avg_post} reports closely matching posterior means and variances across MH, HMC, and our method, reinforcing that all approaches target the same DPD‐based posterior.
\begin{table}[h]
\centering
\caption{Average posterior mean and variance\\over 20 Monte Carlo experiments}
\label{tab:avg_post}
\begin{tabular}{rrrr}
\hline
Method & Parameter & Mean & Variance \\
\hline
LLB & $\mu$     & 0.0029 & 0.0012 \\
    & $\sigma$ & 1.0130 & 0.0008 \\
LLB with SGD & $\mu$     & 0.0031 & 0.0013 \\
             & $\sigma$ & 1.0133 & 0.0008 \\
MH  & $\mu$     & $-0.0025$ & $0.0011$ \\
    & $\sigma$ & $1.0133$  & $0.0008$ \\
HMC & $\mu$     & $-0.0024$ & $0.0011$ \\
    & $\sigma$ & $1.0140$  & $0.0008$ \\
\hline
\end{tabular}
\end{table}
}
\begin{figure}[H]
    \centering
    \includegraphics[width=0.8\textwidth]{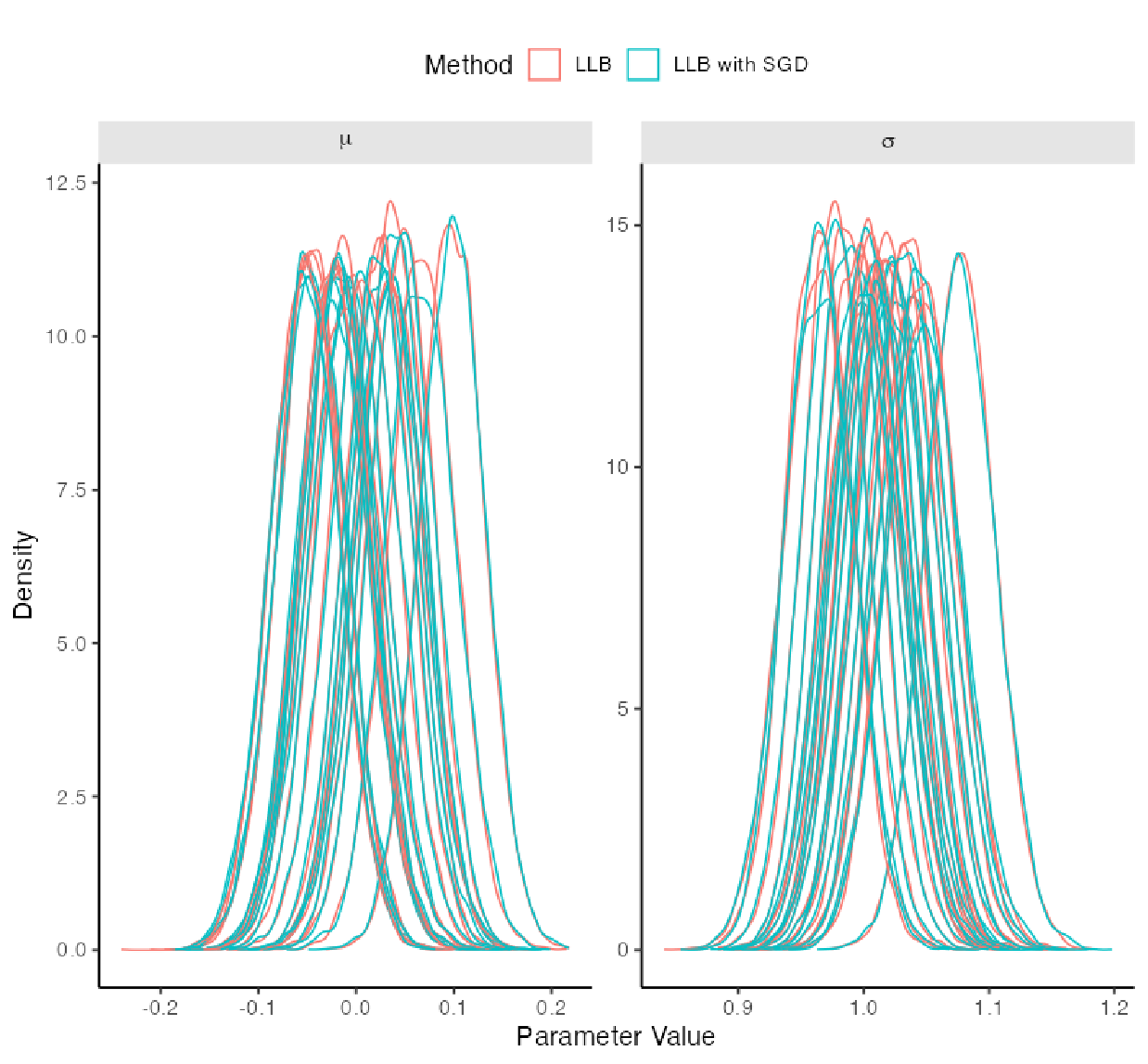}
    \vspace{-2.5mm}
    \caption{Comparison with the standard LLB {over 20 Monte Carlo experiments}}
    \label{fig_exact_LLB}
    \centering
    \includegraphics[width=0.8\textwidth]{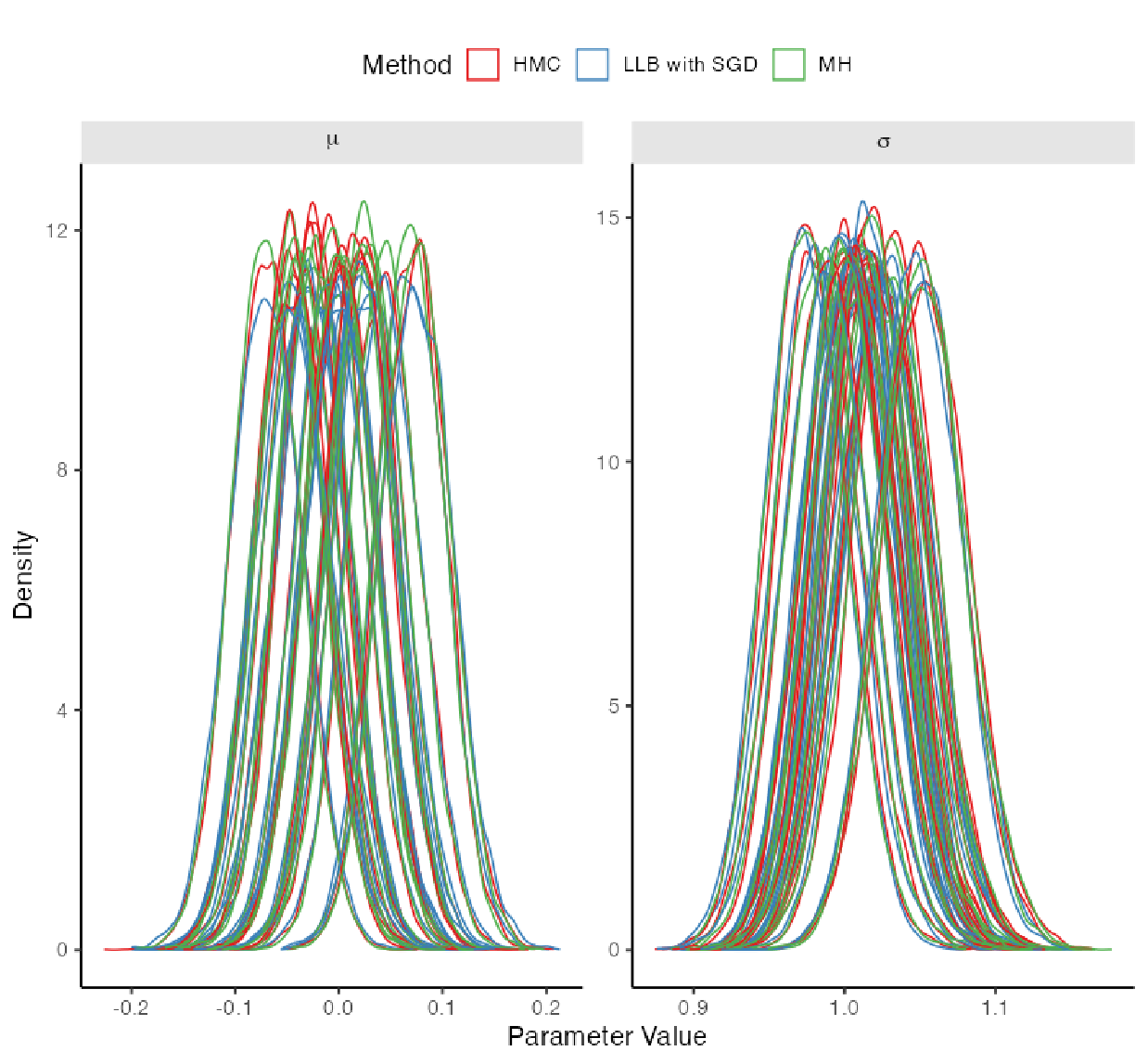}
    \vspace{-2.5mm}
    \caption{Comparison with the standard LLB {over 20 Monte Carlo experiments}}
    \label{fig_mcmc}
\end{figure}

\subsection{Comparison of computational efficiency with conventional methods}
Conventional sampling methods often encounter significant computational challenges. These challenges primarily manifest as in increased computational time and reduced estimation accuracy. To validate the scalability and effectiveness of the proposed method for addressing these challenges, we conducted a comprehensive simulation study.

To evaluate the validity of our proposed method in high-dimensional settings, we designed our simulation framework using a contaminated multivariate normal model, in which we generated $100$ samples from $N_p(\mathbf{0}, \mathbb{I})$ with $5\%$ contamination from $N_p(10\ind, 0.01 \mathbb{I})$. 
{Here, $\mathbb{I}$ denotes the identity matrix, and $\ind$ represents the vector of ones, and $0$ denotes the zero vector. 
For our proposed method, we set the number of pseudo-samples in proposed method $m=10$.}
For the numerical integration in the LLB algorithm, we performed the integration over a grid of $M^p$ points spanning the hypercube $[-D, D]^p$, where we set $M=10$ grid points per dimension and $D=2$ as the integration boundary. While this grid-based numerical integration approach provides high accuracy for one-dimensional problems, the computational cost grows exponentially with the dimensions, as it requires $M^p$ evaluations for $p$-dimensional integration. To optimize the LLB, we used a standard gradient descent with a fixed learning rate $\eta=T^{-1}\sum_{t=1}^T \eta_t$. This setting allowed us to systematically evaluate the effects of increasing dimensionality. We compared our proposed method with the LLB algorithm with numerical integration, both initialized with maximum likelihood estimates and configured to generate $1,000$ posterior samples.

Table \ref{tab:comparison} compares the computational efficiency between our SGD approach and the LLB method using gradient descent with numerical integration (GD+NI) across varying dimensions. {{Across 1,000 Monte Carlo replications, both methods achieved similar mean squared error and posterior variance for $p=2,3$}, indicating comparable estimation accuracy at low dimensions.} The key difference lies in the computational time required to obtain 1,000 samples. 
{In terms of runtime, SGD remains nearly constant at 4.6–5.3 s to obtain 1,000 samples as the dimension increases. By contrast, GD+NI grows rapidly: it requires 6.5 s at $p=2$ and 8.7 s at $p=3$, but then jumps to 526.7 s at $p=5$; at $p=7$ the routine stalled (no convergence within the time limit), so no value is reported. Accordingly, we omit the GD+NI timing at $p=7$ and, to avoid an uneven comparison, we report accuracy metrics (MSE and posterior variance) only for $p=2,3$ and omit them for $p\ge 5$.
This marked divergence in computational time, while maintaining similar accuracy where both methods are stable, highlights the superior scalability of our stochastic approach in higher dimensions.
}

\begin{table}[h]
\centering
\caption{Comparison between SGD and GD+NI methods for obtaining 1,000 samples across different dimensions}
\label{tab:comparison}
\begin{tabular}{rrrrr}
\hline
 & & Time (Seconds) & Mean squared error & Posterior variance\\
\hline
SGD   & $p = 2$ & 4.6770 & 0.0135 & 0.0142\\
      & $p = 3$ & 4.6990 & 0.0134 & 0.0149\\
      & $p = 5$ & 4.8060 & - & -\\
      & $p = 7$ & 5.2220 & - & -\\
GD+NI & $p = 2$ & 6.5190 & 0.0157 & 0.0160\\
      & $p = 3$ & 8.7060 & 0.0153 & 0.0168\\
      & $p = 5$ & 526.7240 & - & -\\
      & $p = 7$ & - & - & -\\
\hline
\end{tabular}
\end{table}
These empirical results not only support our theoretical findings but also demonstrate the practical viability of our method for high-dimensional robust Bayesian inference. The ability to maintain computational efficiency without sacrificing estimation accuracy makes our method particularly suitable for modern high-dimensional applications where traditional numerical integration approaches become computationally prohibitive.

\subsection{Numerical experiments with intractable models}
To validate the practical utility of our method in scenarios in which traditional approaches face computational challenges, we conducted extensive simulation studies using Poisson regression models. This choice of model is particularly important because it represents a common case in which the integral term in \eqref{eq:dpd_loss} lacks a closed-form solution, thus requiring a numerical approximation in conventional approaches. 
{We show the validity of the proposed method through both synthetic and real data.}

\subsubsection{Synthetic data analysis}
For our simulation setup, we generated synthetic data comprising $n=300$ samples from a Poisson regression model, and the experiment was repeated 100 times. The covariates $ x_i \in \mathbb{R}^p$ were drawn from a standard multivariate normal distribution $N_p(\mathbf{0}, \mathbb{I})$, and the response variables were generated according to $y_i \sim \text{Po}(\exp(\beta_0+ x_i^\top \beta))$. We construct the true parameter $(\beta_0, \beta)$ was constructed by independently sampling each component from a uniform distribution over $[0, 1/4]$, thereby ensuring a realistic but challenging estimation scenario.

We compared the proposed stochastic optimization approach with the standard LLB algorithm, which relies on a finite-sum approximation of the infinite sum. For the existing method, we employed the same gradient descent algorithm, as described in the previous subsection. Following the established practice in the literature \citep{kawashima2019robust}, the LLB implementation uses $M=n$ Monte Carlo samples for approximation.
{For our proposed method, we fixed the number of pseudo-samples at $m=n$ similarly in all runs.}
Both methods were initialized using maximum likelihood estimates and configured to generate $1,000$ posterior samples to ensure a fair comparison of their performance.

{Table \ref{tab:performance} presents a comprehensive comparison across varying dimensions ($p=2,5,10,20$) using three key performance metrics averaged over 100 simulation runs: the mean squared error, coverage probability, and average length of 95\% credible intervals. To evaluate the performance, we first computed the posterior median $\hat{\beta}_k$ for each regression coefficient $\beta_k$ and assessed their accuracy using the mean squared error, which is calculated as $(p+1)^{-1}\sum_{k=0}^p(\hat{\beta}_k - \beta_k)^2$. Additionally, we constructed 95\% credible intervals $CI_k$ for each $\beta_k$ and evaluated their effectiveness using two measures: the average length of the intervals, given by $(p+1)^{-1}\sum_{k=0}^p |CI_k|$, and the coverage probability, defined as $(p+1)^{-1}\sum_{k=0}^p I(\beta_k \in CI_k)$.}

Our proposed method shows consistently superior performance across all dimensions, not only achieving lower MSE values but also maintaining coverage probabilities closer to the nominal 95\% level. Furthermore, it provides more precise uncertainty quantification, as demonstrated by the shorter average interval lengths, without compromising coverage. These results convincingly demonstrate that our stochastic optimization approach delivers a more accurate and reliable inference than traditional methods, particularly in settings where model complexity precludes analytical evaluation of the objective function. The robust performance across increasing dimensions further highlights the scalability of our approach, which makes it particularly attractive for modern high-dimensional applications.

\begin{table}[h]
\centering
\caption{Comparison of the estimation performance between the proposed and existing methods}
\label{tab:performance}
\begin{tabular}{rrrrrr}
\hline
Performance Measure & Method & $p = 2$ & $p = 5$ & $p = 10$ & $p = 20$ \\
\hline
\multirow{2}{*}{Mean Squared Error} 
& Proposed & 0.0037 & 0.0035 & 0.0037 & 0.0034 \\
& Existing & 0.2011 & 0.1353 & 0.0763 & 0.0431 \\
\hline
\multirow{2}{*}{Coverage Probability} 
& Proposed & 0.9233 & 0.9433 & 0.9409 & 0.9543 \\
& Existing & 0.7967 & 0.7083 & 0.7391 & 0.8124 \\
\hline
\multirow{2}{*}{Average Length} 
& Proposed & 0.2343 & 0.2342 & 0.2317 & 0.2329 \\
& Existing & 1.3740 & 0.8711 & 0.6204 & 0.4770 \\
\hline
\end{tabular}
\end{table}
{
\subsubsection{Real data analysis}
As a real data-complement to the intractable-model experiments above, we analyze the \texttt{medpar} dataset from the \texttt{COUNT} package (1{,}495 inpatient stays from the 1991 Arizona Medicare MedPAR files, DRG~112; see \cite{hilbe2025count}). Hospital length of stay (\texttt{los}) is well known to exhibit a heavy right tail and occasional extreme values, making a robust DPD-based analysis particularly attractive for this application. We model \texttt{los} with a Poisson regression,
$$
\texttt{los} \sim \texttt{intercept} + \texttt{hmo} + \texttt{white} + \texttt{age80} + \texttt{died} + \texttt{type}.
$$
We compare our DPD-based LLB with SGD sampler against a standard Bayesian Poisson regression fitted via \texttt{MCMCpack::MCMCpoisson} \citep{martin2011mcmcpack}.

Implementation follows the simulation defaults: initialization at the Poisson MLE, Dirichlet-bootstrap weights via exponential draws, robustness parameter $\alpha=0.5$, $m=100$ pseudo-samples per iteration for the intractable term, and $1{,}000$ Dirichlet-bootstrap replicates for posterior draws. The comparator uses \texttt{MCMCpoisson} with \texttt{mcmc}$=10{,}000$, \texttt{burnin}$=10{,}000$, and \texttt{thin}$=10$.

Predictive performance was summarized by the bulk log predictive density (LPD), computed on the central 95\% of responses (excluding the top 5\% of \texttt{los}). On \texttt{medpar}, the bulk LPD for LLB with SGD was $-5182.8458$, whereas standard Bayes achieved $-5295.9202$, an improvement of $+113.0744$ in favor of our method (larger is better). Taken together with the simulation evidence, this real-world application shows that LLB with SGD delivers superior predictive performance in practice for intractable Poisson GLMs while remaining robust to the presence of extreme hospital stays.
}

\section{Concluding remarks}
\label{sec5}
This study introduces a novel stochastic gradient-based methodology for {approximately} sampling from DPD-based posterior distributions, addressing the fundamental challenges in robust Bayesian inference. Our primary contribution lies in developing a computationally efficient approach that overcomes two significant limitations of existing methods: the intractability of integral terms in DPD for general parametric models and computational burden in high-dimensional settings. The theoretical framework we developed combines the LLB with stochastic gradient descent, providing a robust alternative to traditional MCMC methods while maintaining theoretical guarantees. Through extensive simulation studies, we demonstrated that our method produces samples that accurately represent posterior uncertainty, achieving results comparable to those of established methods, such as MH and HMC algorithms.

A key advantage of our approach is its ability to handle models in which the DPD integral term lacks an explicit form, as demonstrated by its application to GLMs. The simulation results with Poisson regression highlight not only the computational efficiency of the method but also its superior statistical accuracy compared to conventional approaches. Notably, the method maintains a stable estimation performance across varying dimensions, consistently achieving lower mean squared error values and improved coverage probabilities for credible intervals. The practical implications of this study are substantial, offering practitioners a computationally tractable approach for robust Bayesian inference that effectively handles outliers and model misspecification in high-dimensional settings. The success of the method with intractable models, particularly in GLMs, suggests its broad applicability across diverse statistical applications.

Several promising directions for future research have emerged from this study. First, extending the methodology to incorporate informative prior distributions would enable shrinkage estimation, potentially improving parameter estimation in high-dimensional settings, where regularization is crucial. Second, adapting the approach to accommodate random effects in generalized linear mixed models would broadens its applicability to hierarchical data structures, which are common in many applied fields. Third, investigating the stability of the posterior predictive distributions using our algorithm merits attention, particularly given the recent findings of \cite{jackjewson2024StabilityGeneralBayesian} regarding the stabilizing effects of DPD-based posteriors on predictive distributions. Additional research directions could include exploring applications to more complex statistical models such as mixture models. In conclusion, this study provides significant advancements in robust Bayesian computation, offering both theoretical guarantees and practical applicability. The ability of the proposed method to handle high-dimensional and intractable models while maintaining computational efficiency is a valuable tool for modern statistical applications.

\backmatter
\subsection*{Acknowledgements}
{The authors would like to thank the handling editor and the two anonymous reviewers for their helpful comments and suggestions.} The authors are grateful to Prof. Kouji Tahata for his valuable comments and for providing the computational infrastructure essential for this research.

\section*{Declarations}
\begin{itemize}
\item Funding:~This work was supported by JSPS KAKENHI Grant Numbers JP20K03756, JP23K13019, JP23K20592, and JP24K23870. 
\item Conflict of interest/Competing interests: ~Not applicable.
\item Ethics approval and consent to participate: ~Not applicable.
\item Consent for publication: ~All authors have read and agreed to the published version of the manuscript.
\item Data availability: ~Not applicable.
\item Materials availability: ~Not applicable.
\item Code availability: ~The R code used to reproduce the experimental results is available in the following GitHub repository: \url{https://github.com/naruki-sonobe/DPD-Bayes-SGD}.
\item Author contribution:~These authors contributed equally to this work.
\end{itemize}

\bibliography{sn-bibliography}

@article{basu1998robust,
  title={Robust and efficient estimation by minimising a density power divergence},
  author={Basu, Ayanendranath and Harris, Ian R and Hjort, Nils L and Jones, MC},
  journal={Biometrika},
  volume={85},
  number={3},
  pages={549--559},
  year={1998},
  publisher={Oxford University Press}
}

@article{bissiri2016general,
  title={A general framework for updating belief distributions},
  author={Bissiri, Pier Giovanni and Holmes, Chris C and Walker, Stephen G},
  journal={Journal of the Royal Statistical Society: Series B (Statistical Methodology)},
  volume={78},
  number={5},
  pages={1103--1130},
  year={2016},
  publisher={Wiley Online Library}
}

@article{eguchi2001robustifing,
  title={Robustifing maximum likelihood estimation by psi-divergence},
  author={Eguchi, Shinto and Kano, Yutaka},
  journal={ISM Research Memorandum},
  volume={802},
  pages={762--763},
  year={2001}
}

@article{ghosh2016robust,
  title={Robust Bayes estimation using the density power divergence},
  author={Ghosh, Abhik and Basu, Ayanendranath},
  journal={Annals of the Institute of Statistical Mathematics},
  volume={68},
  number={2},
  pages={413--437},
  year={2016},
  publisher={Springer}
}

@article{lyddon2019general,
  title={General {Bayesian} updating and the loss-likelihood bootstrap},
  author={Lyddon, Simon P and Holmes, CC and Walker, SG},
  journal={Biometrika},
  volume={106},
  number={2},
  pages={465--478},
  year={2019},
  publisher={Oxford University Press}
}

@article{okuno2024minimizing,
title = {Minimizing Robust Density Power-Based Divergences for General Parametric Density Models},
author = {Okuno, Akifumi},
date = {2024-10-01},
journal = {Annals of the Institute of Statistical Mathematics},
volume = {76},
number = {5},
pages = {851--875},
issn = {1572-9052},
year = {2024},
doi = {10.1007/s10463-024-00906-9}
}

@article{mihoko2002robust,
  title={Robust blind source separation by beta divergence},
  author={Minami, Mihoko and Eguchi, Shinto},
  journal={Neural computation},
  volume={14},
  number={8},
  pages={1859--1886},
  year={2002},
  publisher={MIT Press One Rogers Street, Cambridge, MA 02142-1209, USA journals-info~…}
}

@article{newton1994approximate,
  title={Approximate {Bayesian} inference with the weighted likelihood bootstrap},
  author={Newton, Michael A and Raftery, Adrian E},
  journal={Journal of the Royal Statistical Society Series B: Statistical Methodology},
  volume={56},
  number={1},
  pages={3--26},
  year={1994},
  publisher={Oxford University Press}
}

@article{newton2021weighted,
  title={Weighted {Bayesian} bootstrap for scalable posterior distributions},
  author={Newton, Michael A and Polson, Nicholas G and Xu, Jianeng},
  journal={Canadian Journal of Statistics},
  volume={49},
  number={2},
  pages={421--437},
  year={2021},
  publisher={Wiley Online Library}
}

@article{jewson2018principles,
  title={Principles of {Bayesian} inference using general divergence criteria},
  author={Jewson, Jack and Smith, Jim Q and Holmes, Chris},
  journal={Entropy},
  volume={20},
  number={6},
  pages={442},
  year={2018},
  publisher={Multidisciplinary Digital Publishing Institute}
}

@article{nakagawa2020robust,
  title={Robust {Bayesian} inference via $\gamma$-divergence},
  author={Nakagawa, Tomoyuki and Hashimoto, Shintaro},
  journal={Communications in Statistics-Theory and Methods},
  volume={49},
  number={2},
  pages={343--360},
  year={2020},
  publisher={Taylor \& Francis}
}

@article{matsubara2022robust,
  title = {Robust Generalised {Bayesian} Inference for Intractable Likelihoods},
  author = {Matsubara, Takuo and Knoblauch, Jeremias and Briol, Fran{\c c}ois-Xavier and Oates, Chris J.},
  year = {2022},
  month = apr,
  journal = {Journal of the Royal Statistical Society Series B: Statistical Methodology},
  volume = {84},
  number = {3},
  pages = {997--1022},
  issn = {1369-7412},
  doi = {10.1111/rssb.12500},
  abstract = {Generalised Bayesian inference updates prior beliefs using a loss function, rather than a likelihood, and can therefore be used to confer robustness against possible mis-specification of the likelihood. Here we consider generalised Bayesian inference with a Stein discrepancy as a loss function, motivated by applications in which the likelihood contains an intractable normalisation constant. In this context, the Stein discrepancy circumvents evaluation of the normalisation constant and produces generalised posteriors that are either closed form or accessible using the standard Markov chain Monte Carlo. On a theoretical level, we show consistency, asymptotic normality, and bias-robustness of the generalised posterior, highlighting how these properties are impacted by the choice of Stein discrepancy. Then, we provide numerical experiments on a range of intractable distributions, including applications to kernel-based exponential family models and non-Gaussian graphical models.}
}

@article{fujisawa2008robust,
  title = {Robust parameter estimation with a small bias against heavy contamination},
  journal = {Journal of Multivariate Analysis},
  volume = {99},
  number = {9},
  pages = {2053-2081},
  year = {2008},
  issn = {0047-259X},
  doi = {10.1016/j.jmva.2008.02.004},
  author = {Hironori Fujisawa and Shinto Eguchi},
  keywords = {primary, secondary, Bias, Characterization, Cross entropy, Divergence, Pythagorean relation},
  abstract = {In this paper we consider robust parameter estimation based on a certain cross entropy and divergence. The robust estimate is defined as the minimizer of the empirically estimated cross entropy. It is shown that the robust estimate can be regarded as a kind of projection from the viewpoint of a Pythagorean relation based on the divergence. This property implies that the bias caused by outliers can become sufficiently small even in the case of heavy contamination. It is seen that the asymptotic variance of the robust estimator is naturally overweighted in proportion to the ratio of contamination. One may surmise that another form of cross entropy can present the same behavior as that discussed above. It can be proved under some conditions that no cross entropy can present the same behavior except for the cross entropy considered here and its monotone transformation.}
}

@inproceedings{jewson2023beta,
  author = {Jewson, Jack and Ghalebikesabi, Sahra and Holmes, Chris C},
  booktitle = {Advances in Neural Information Processing Systems},
  editor = {A. Oh and T. Naumann and A. Globerson and K. Saenko and M. Hardt and S. Levine},
  pages = {76974--77001},
  publisher = {Curran Associates, Inc.},
  title = {Differentially Private Statistical Inference through $\beta$-Divergence One Posterior Sampling},
  url = {https://proceedings.neurips.cc/paper\_files/paper/2023/file/f3024ea88cec9f45a411cf4d51ab649c-Paper-Conference.pdf},
  volume = {36},
  year = {2023}
}

@inproceedings{knoblauch2018beta,
  author = {Knoblauch, Jeremias and Jewson, Jack E and Damoulas, Theodoros},
  booktitle = {Advances in Neural Information Processing Systems},
  editor = {S. Bengio and H. Wallach and H. Larochelle and K. Grauman and N. Cesa-Bianchi and R. Garnett},
  pages = {},
  publisher = {Curran Associates, Inc.},
  title = {Doubly Robust {Bayesian} Inference for Non-Stationary Streaming Data with $\beta$-Divergences},
  url = {https://proceedings.neurips.cc/paper\_files/paper/2018/file/a3f390d88e4c41f2747bfa2f1b5f87db-Paper.pdf},
  volume = {31},
  year = {2018}
}

@article{warwick2005tuning,
  author = {J. Warwick and M. C. Jones},
  title = {Choosing a robustness tuning parameter},
  journal = {Journal of Statistical Computation and Simulation},
  volume = {75},
  number = {7},
  pages = {581--588},
  year = {2005},
  publisher = {Taylor \& Francis},
  doi = {10.1080/00949650412331299120}
}

@article{basak2021tuning,
  author = {Sancharee Basak and Ayanendranath Basu and M. C. Jones},
  title = {On the ‘optimal’ density power divergence tuning parameter},
  journal = {Journal of Applied Statistics},
  volume = {48},
  number = {3},
  pages = {536--556},
  year = {2021},
  publisher = {Taylor \& Francis},
  doi = {10.1080/02664763.2020.1736524},
  note ={PMID: 35706540}
}

@article{yonekura2023adaptation,
  title={Adaptation of the tuning parameter in general {Bayesian} inference with robust divergence},
  author={Yonekura, Shouto and Sugasawa, Shonosuke},
  journal={Statistics and Computing},
  volume={33},
  number={39},
  year={2023},
  publisher={Springer},
  doi={10.1007/s11222-023-10205-7}
}

@article{ghosh2022robust,
  ISSN = {10170405, 19968507},
  URL = {https://www.jstor.org/stable/27118797},
  abstract = {Although Bayesian inference is a popular paradigm among a large segment of scientists, including statisticians, most applications consider objective priors and need critical investigations. While it has several optimal properties, Bayesian inference lacks robustness against data contamination and model misspecification, which becomes a problem when using objective priors. As such, we present a general formulation of a Bayes pseudo-posterior distribution that leads to robust inference. Exponential convergence results related to the new pseudo-posterior and the corresponding Bayes estimators are established under a general parametric setup, and illustrations are provided for independent stationary and nonhomogeneous models. Several additional details and properties of the procedure are described, including estimation under fixed-design regression models.},
  author = {Abhik Ghosh and Tuhin Majumder and Ayanendranath Basu},
  journal = {Statistica Sinica},
  number = {2},
  pages = {787--823},
  publisher = {Institute of Statistical Science, Academia Sinica},
  title = {GENERAL ROBUST BAYES PSEUDO-POSTERIORS: EXPONENTIAL CONVERGENCE RESULTS WITH APPLICATIONS},
  urldate = {2024-11-22},
  volume = {32},
  year = {2022}
}

@misc{Rayana:2016,
author = {Shebuti Rayana},
year = {2016},
title = {{ODDS} Library},
url = {https://shebuti.com/outlier-detection-datasets-odds/},
institution = {Stony Brook University, Department of Computer Sciences}
}

@book{niederreiter1992qmc,
  author = {Niederreiter, Harald},
  title = {Random Number Generation and Quasi-Monte Carlo Methods},
  publisher = {Society for Industrial and Applied Mathematics},
  year = {1992},
  doi = {10.1137/1.9781611970081}
}

@article{kawashima2019robust,
  title = {Robust and Sparse Regression in Generalized Linear Model by Stochastic Optimization},
  author = {Kawashima, Takayuki and Fujisawa, Hironori},
  year = {2019},
  month = dec,
  journal = {Japanese Journal of Statistics and Data Science},
  volume = {2},
  number = {2},
  pages = {465--489},
  issn = {2520-8764},
  doi = {10.1007/s42081-019-00049-9},
  abstract = {The generalized linear model (GLM) plays a key role in regression analyses. In high-dimensional data, the sparse GLM has been used but it is not robust against outliers. Recently, robust methods have been proposed for the specific example of the sparse GLM. Among them, we focus on the robust and sparse linear regression based on the $\gamma$-divergence. The estimator of the $\gamma$-divergence has strong robustness under heavy contamination. In this paper, we extend the robust and sparse linear regression based on the $\gamma$-divergence to the robust and sparse GLM based on the $\gamma$-divergence with a stochastic optimization approach to obtain the estimate. We adopt the randomized stochastic projected gradient descent as a stochastic optimization approach and extend the established convergence property to the classical first-order necessary condition. By virtue of the stochastic optimization approach, we can efficiently estimate parameters for very large problems. Particularly, we show the linear regression, logistic regression and Poisson regression with $L_1$ regularization in detail as specific examples of robust and sparse GLM. In numerical experiments and real data analysis, the proposed method outperformed comparative methods.}
}

@article{pei-shienwu2023ComparisonLearningRate,
  title = {A Comparison of Learning Rate Selection Methods in Generalized {Bayesian} Inference},
  author = {Wu, Pei-Shien and Martin, Ryan},
  year = {2023},
  month = mar,
  journal = {Bayesian Analysis},
  volume = {18},
  number = {1},
  pages = {105--132},
  doi = {10.1214/21-BA1302}
}

@article{petergrunwald2017InconsistencyBayesianInference,
  title = {Inconsistency of {Bayesian} Inference for Misspecified Linear Models, and a Proposal for Repairing It},
  author = {Gr{\"u}nwald, Peter and van Ommen, Thijs},
  year = {2017},
  month = dec,
  journal = {Bayesian Analysis},
  volume = {12},
  number = {4},
  pages = {1069--1103},
  doi = {10.1214/17-BA1085}
}

@article{syring2019Calibratinggeneralposterior,
  title = {Calibrating General Posterior Credible Regions},
  author = {Syring, Nicholas and Martin, Ryan},
  year = {2019},
  month = jun,
  journal = {Biometrika},
  volume = {106},
  number = {2},
  pages = {479--486},
  issn = {0006-3444},
  doi = {10.1093/biomet/asy054},
  urldate = {2024-12-17},
  abstract = {Calibration of credible regions derived from under- or misspecified models is an important and challenging problem. In this paper, we introduce a scalar tuning parameter that controls the posterior distribution spread, and develop a Monte Carlo algorithm that sets this parameter so that the corresponding credible region achieves the nominal frequentist coverage probability.}
}

@article{holmes2017Assigningvaluepower,
  title = {Assigning a Value to a Power Likelihood in a General {Bayesian} Model},
  author = {Holmes, C. C. and Walker, S. G.},
  year = {2017},
  month = jun,
  journal = {Biometrika},
  volume = {104},
  number = {2},
  pages = {497--503},
  issn = {0006-3444},
  doi = {10.1093/biomet/asx010},
  urldate = {2024-12-17},
  abstract = {Bayesian robustness under model misspecification is a current area of active research. Among recent ideas is that of raising the likelihood function to a power. In this paper we discuss the choice of appropriate power and provide examples.}
}

@article{walker2016bayes,
  title = {{Bayesian} Information in an Experiment and the {Fisher} Information Distance},
  author = {Walker, Stephen G.},
  year = {2016},
  month = may,
  journal = {Statistics \& Probability Letters},
  volume = {112},
  pages = {5--9},
  issn = {0167-7152},
  doi = {10.1016/j.spl.2016.01.014},
  abstract = {There are two forms of Fisher information; for the parameter of a model and for the information in a density model. These two forms are shown to be fundamentally connected through a measure of gain in information from a Bayesian experiment.}
}

@article{ghadimi2013sgd,
  title = {Stochastic First- and Zeroth-Order Methods for Nonconvex Stochastic Programming},
  author = {Ghadimi, Saeed and Lan, Guanghui},
  year = {2013},
  journal = {SIAM Journal on Optimization},
  volume = {23},
  number = {4},
  eprint = {https://doi.org/10.1137/120880811},
  pages = {2341--2368},
  abstract = {In this paper, we introduce a new stochastic approximation type algorithm, namely, the randomized stochastic gradient (RSG) method, for solving an important class of nonlinear (possibly nonconvex) stochastic programming problems. We establish the complexity of this method for computing an approximate stationary point of a nonlinear programming problem. We also show that this method possesses a nearly optimal rate of convergence if the problem is convex. We discuss a variant of the algorithm which consists of applying a postoptimization phase to evaluate a short list of solutions generated by several independent runs of the RSG method, and we show that such modification allows us to improve significantly the large-deviation properties of the algorithm. These methods are then specialized for solving a class of simulation-based optimization problems in which only stochastic zeroth-order information is available.}
}

@article{jackjewson2024StabilityGeneralBayesian,
  title = {On the Stability of General {Bayesian} Inference},
  author = {Jewson, Jack and Smith, Jim Q. and Holmes, Chris},
  year = {2024},
  month = jan,
  journal = {Bayesian Analysis},
  pages = {1--31},
  doi = {10.1214/24-BA1502}
}

@misc{frazier2024loss,
  title={The Impact of Loss Estimation on {Gibbs} Measures}, 
  author={David T. Frazier and Jeremias Knoblauch and Christopher Drovandi},
  year={2024},
  eprint={2404.15649},
  archivePrefix={arXiv},
  primaryClass={math.ST},
  url={https://arxiv.org/abs/2404.15649}
}

@article{dawid2016score,
author = {Dawid, A. Philip and Musio, Monica and Ventura, Laura},
title = {Minimum Scoring Rule Inference},
journal = {Scandinavian Journal of Statistics},
volume = {43},
number = {1},
pages = {123-138},
keywords = {B-robustness, Bregman estimate, composite score, Godambe information, M-estimator, pseudo-likelihood, Tsallis score, unbiased estimating equation},
doi = {https://doi.org/10.1111/sjos.12168},
abstract = {Abstract Proper scoring rules are devices for encouraging honest assessment of probability distributions. Just like log-likelihood, which is a special case, a proper scoring rule can be applied to supply an unbiased estimating equation for any statistical model, and the theory of such equations can be applied to understand the properties of the associated estimator. In this paper, we discuss some novel applications of scoring rules to parametric inference. In particular, we focus on scoring rule test statistics, and we propose suitable adjustments to allow reference to the usual asymptotic chi-squared distribution. We further explore robustness and interval estimation properties, by both theory and simulations.},
year = {2016}
}

@InProceedings{fong2019posterior,
  title = 	 {Scalable Nonparametric Sampling from Multimodal Posteriors with the Posterior Bootstrap},
  author =       {Fong, Edwin and Lyddon, Simon and Holmes, Chris},
  booktitle = 	 {Proceedings of the 36th International Conference on Machine Learning},
  pages = 	 {1952--1962},
  year = 	 {2019},
  editor = 	 {Chaudhuri, Kamalika and Salakhutdinov, Ruslan},
  volume = 	 {97},
  series = 	 {Proceedings of Machine Learning Research},
  month = 	 {09--15 Jun},
  publisher =    {PMLR},
  pdf = 	 {http://proceedings.mlr.press/v97/fong19a/fong19a.pdf},
  url = 	 {https://proceedings.mlr.press/v97/fong19a.html},
  abstract = 	 {Increasingly complex datasets pose a number of challenges for Bayesian inference. Conventional posterior sampling based on Markov chain Monte Carlo can be too computationally intensive, is serial in nature and mixes poorly between posterior modes. Furthermore, all models are misspecified, which brings into question the validity of the conventional Bayesian update. We present a scalable Bayesian nonparametric learning routine that enables posterior sampling through the optimization of suitably randomized objective functions. A Dirichlet process prior on the unknown data distribution accounts for model misspecification, and admits an embarrassingly parallel posterior bootstrap algorithm that generates independent and exact samples from the nonparametric posterior distribution. Our method is particularly adept at sampling from multimodal posterior distributions via a random restart mechanism, and we demonstrate this on Gaussian mixture model and sparse logistic regression examples.}
}

@article{giummole2019score,
  title = {Objective {{Bayesian}} Inference with Proper Scoring Rules},
  author = {Giummol{\`e}, F. and Mameli, V. and Ruli, E. and Ventura, L.},
  year = {2019},
  month = sep,
  journal = {TEST},
  volume = {28},
  number = {3},
  pages = {728--755},
  issn = {1863-8260},
  doi = {10.1007/s11749-018-0597-z}
}

@article{pacchiardi2024general,
author = {Lorenzo Pacchiardi and Sherman Khoo and Ritabrata Dutta},
title = {{Generalized Bayesian likelihood-free inference}},
volume = {18},
journal = {Electronic Journal of Statistics},
number = {2},
publisher = {Institute of Mathematical Statistics and Bernoulli Society},
pages = {3628--3686},
keywords = {generalized Bayes, likelihood-free inference, pseudo-marginal MCMC, scoring rules},
year = {2024},
doi = {10.1214/24-EJS2283},
URL = {https://doi.org/10.1214/24-EJS2283}
}

@InProceedings{dellaporta2022robust,
  title = 	 { Robust {Bayesian} Inference for Simulator-based Models via the {MMD} Posterior Bootstrap },
  author =       {Dellaporta, Charita and Knoblauch, Jeremias and Damoulas, Theodoros and Briol, Francois-Xavier},
  booktitle = 	 {Proceedings of The 25th International Conference on Artificial Intelligence and Statistics},
  pages = 	 {943--970},
  year = 	 {2022},
  editor = 	 {Camps-Valls, Gustau and Ruiz, Francisco J. R. and Valera, Isabel},
  volume = 	 {151},
  series = 	 {Proceedings of Machine Learning Research},
  month = 	 {28--30 Mar},
  publisher =    {PMLR},
  pdf = 	 {https://proceedings.mlr.press/v151/dellaporta22a/dellaporta22a.pdf},
  url = 	 {https://proceedings.mlr.press/v151/dellaporta22a.html},
  abstract = 	 { Simulator-based models are models for which the likelihood is intractable but simulation of synthetic data is possible. They are often used to describe complex real-world phenomena, and as such can often be misspecified in practice. Unfortunately, existing Bayesian approaches for simulators are known to perform poorly in those cases. In this paper, we propose a novel algorithm based on the posterior bootstrap and maximum mean discrepancy estimators. This leads to a highly-parallelisable Bayesian inference algorithm with strong robustness properties. This is demonstrated through an in-depth theoretical study which includes generalisation bounds and proofs of frequentist consistency and robustness of our posterior. The approach is then assessed on a range of examples including a g-and-k distribution and a toggle-switch model. }
}

@article{miller2021general,
  author  = {Jeffrey W. Miller},
  title   = {Asymptotic Normality, Concentration, and Coverage of Generalized Posteriors},
  journal = {Journal of Machine Learning Research},
  year    = {2021},
  volume  = {22},
  number  = {168},
  pages   = {1--53},
  url     = {http://jmlr.org/papers/v22/20-469.html}
}

@inproceedings{lyddon2018nonpara,
author = {Lyddon, Simon and Walker, Stephen and Holmes, Chris},
title = {Nonparametric learning from {Bayesian} models with randomized objective functions},
year = {2018},
publisher = {Curran Associates Inc.},
address = {Red Hook, NY, USA},
abstract = {Bayesian learning is built on an assumption that the model space contains a true reflection of the data generating mechanism. This assumption is problematic, particularly in complex data environments. Here we present a Bayesian nonparametric approach to learning that makes use of statistical models, but does not assume that the model is true. Our approach has provably better properties than using a parametric model and admits a Monte Carlo sampling scheme that can afford massive scalability on modern computer architectures. The model-based aspect of learning is particularly attractive for regularizing nonparametric inference when the sample size is small, and also for correcting approximate approaches such as variational Bayes (VB). We demonstrate the approach on a number of examples including VB classifiers and Bayesian random forests.},
booktitle = {Proceedings of the 32nd International Conference on Neural Information Processing Systems},
pages = {2075–2085},
numpages = {11},
location = {Montr\'{e}al, Canada},
series = {NIPS'18}
}

@Manual{hilbe2025count,
  title = {COUNT: Functions, Data and Code for Count Data},
  author = {Joseph M Hilbe},
  year = {2025},
  note = {R package version 1.3.5},
  url = {https://CRAN.R-project.org/package=COUNT},
}

@article{martin2011mcmcpack,
 title={{MCMC}pack: Markov Chain {Monte} {Carlo} in {R}},
 volume={42},
 doi={10.18637/jss.v042.i09},
 abstract={We introduce &amp;lt;b&amp;gt;MCMCpack&amp;lt;/b&amp;gt;, an R package that contains functions to perform Bayesian inference using posterior simulation for a number of statistical models. In addition to code that can be used to fit commonly used models, &amp;lt;b&amp;gt;MCMCpack&amp;lt;/b&amp;gt; also contains some useful utility functions, including some additional density functions and pseudo-random number generators for statistical distributions, a general purpose Metropolis sampling algorithm, and tools for visualization.},
 number={9},
 journal={Journal of Statistical Software},
 author={Martin, Andrew D. and Quinn, Kevin M. and Park, Jong Hee},
 year={2011},
 pages={1–21}
}

@article{carpenter2017stan,
  title   = {{Stan}: A Probabilistic Programming Language},
  author  = {Carpenter, Bob and Gelman, Andrew and Hoffman, Matthew D. and
             Lee, Daniel and Goodrich, Ben and Betancourt, Michael and
             Brubaker, Marcus and Guo, Jiqiang and Li, Peter and Riddell, Allen},
  journal = {Journal of Statistical Software},
  year    = {2017},
  volume  = {76},
  number  = {1},
  pages   = {1--32},
  doi     = {10.18637/jss.v076.i01}
}

@article{hooker2014robust,
  title = {{Bayesian} Model Robustness via Disparities},
  author = {Hooker, Giles and Vidyashankar, Anand N.},
  year = 2014,
  month = sep,
  journal = {TEST},
  volume = {23},
  number = {3},
  pages = {556--584},
  issn = {1863-8260},
  doi = {10.1007/s11749-014-0360-z},
  abstract = {This paper develops a methodology for robust Bayesian inference through the use of disparities. Metrics such as Hellinger distance and negative exponential disparity have a long history in robust estimation in frequentist inference. We demonstrate that an equivalent robustification may be made in Bayesian inference by substituting an appropriately scaled disparity for the log likelihood to which standard Monte Carlo Markov Chain methods may be applied. A particularly appealing property of minimum-disparity methods is that while they yield robustness with a breakdown point of 1/2, the resulting parameter estimates are also efficient when the posited probabilistic model is correct. We demonstrate that a similar property holds for disparity-based Bayesian inference. We further show that in the Bayesian setting, it is also possible to extend these methods to robustify regression models, random effects distributions and other hierarchical models. These models require integrating out a random effect; this is achieved via MCMC but would otherwise be numerically challenging. The methods are demonstrated on real-world data.}
}
\newpage

\vspace{1cm}
\begin{center}
{\LARGE
{\bf Supplementary Materials for ``Sampling from density power divergence-based posterior distribution via stochastic optimization"}
}
\end{center}

This Supplementary Materials augment the main paper with additional theory and numerical experiments on the DPD-based posterior with LLB–SGD. 
Section~\ref{a1} provides the proofs of the convergence conditions for our method in exponential-families. 
Section~\ref{a2} demonstrates that our method delivers robust density estimation for analytically intractable models (Inverse Gaussian and Gompertz) in the presence of contamination.
Section~\ref{a3} present simulations for parametric (Normal, Poisson) and generalized linear model (GLM) settings, assessing MSE, coverage, and interval length with sensitivity to the sample size, the robustness parameter $\alpha$ and the particle count $m$. 
Section~\ref{a4} shows that DPD–LLB is robust relative to standard Bayes and closely matches MCMC for the same DPD-based posterior.

\begin{appendix}
{
\section{Confirmation of the convergence conditions for our method}\label{a1}
In this section, we prove that the solution generated by our algorithm converges in probability to a minimizer of the objective function $\mathcal{L}_w$ in probability. Let $v>0$ and $\|\cdot\|$ denote the Euclidean norm (its induced operator norm on matrices). We require $\mathcal{L}_w$ to satisfy the following four conditions.
\begin{itemize}
\item[(i)] $\mathcal{L}_w(\theta)$ is differentiable over $\Theta$.
\item[(ii)] There exists $L>0$ such that for any $\theta,\theta'\in\Theta$,
\[
\|\nabla_\theta \mathcal{L}_w(\theta)-\nabla_\theta \mathcal{L}_w(\theta')\|
\le L\|\theta-\theta'\|.
\]
\item[(iii)] $\E_{\xi^{(t)}}\!\big[g(\theta^{(t)}\mid\xi^{(t)})\big]
=\nabla_\theta\mathcal{L}_w(\theta^{(t)})$ for all $t=1,\ldots,T$.
\item[(iv)] $\E_{\xi^{(t)}}\!\big[\|g(\theta^{(t)}\mid\xi^{(t)})-\nabla_\theta\mathcal{L}_w(\theta^{(t)})\|^2\big]\le v$ for all $t=1,\ldots,T$.
\end{itemize}
Below, we work within the exponential-family setup. We first state sufficient assumptions that ensure (a)–(d) in Proposition \ref{prp:abcd} hold. These results (a)–(d) will then be used to establish (ii)–(iv).
Throughout, we take (i) as given and focus on proving (ii)–(iv).

Consider the exponential family
\[
f(x\mid \theta)
= h(x)\exp\!\left\{\ip{\eta(\theta)}{T(x)} - A(\theta)\right\},~
\theta\in\Theta\subset\R^p,
\]
where $T(x)\in\R^k$ is the sufficient statistic, $\eta(\theta)\in\R^k$ is the natural parameter, and $\ip{\cdot}{\cdot}$ denotes the inner product.
The score (the $\theta$-gradient) is
\[
u(x\mid\theta)=\nabla_\theta\log f(x\mid\theta)
= J_\eta(\theta)^\top\!\{T(x)-\mu(\theta)\},~
\mu(\theta)\coloneqq\nabla_\eta A(\eta(\theta))=\E_{F_\theta}[T(X)],
\]
where $J_\eta(\theta)=\partial \eta(\theta)/\partial \theta^\top$.
Fix $\alpha>0$ and write $s(x\mid\theta)\coloneqq f(x\mid\theta)^{\alpha}\,u(x\mid\theta)$.

\begin{asm}\label{ass:base}\mbox{}\\
(i) $\eta(\cdot),A(\cdot)$ are $C^1$.\\
(ii) $\Theta$ is compact and $\{\eta(\theta):\theta\in\Theta\}$ is contained in the interior of the natural parameter space.\\
(iii) Hence $C_J\coloneqq\sup_{\theta\in\Theta}\norm{J_\eta(\theta)}<\infty$ and $C_\mu\coloneqq\sup_{\theta\in\Theta}\norm{\mu(\theta)}<\infty$.
\end{asm}

Let $\Lambda_\gamma\coloneqq\{\gamma\,\eta(\theta):\theta\in\Theta\}$, assumed to be contained in a compact subset of the interior of the natural parameter space. For discrete sample spaces, replace integrals by sums.

\begin{asm}\label{ass:envelope}
\begin{align*}
\sup_{\lambda\in\Lambda_{\alpha}}
\int h(x)^{\alpha} e^{\ip{\lambda}{T(x)}}\{1+\norm{T(x)}\}\,dx &<\infty,\\
\sup_{\lambda\in\Lambda_{2\alpha+1}}
\int h(x)^{2\alpha+1} e^{\ip{\lambda}{T(x)}}\{1+\norm{T(x)}^2\}\,dx &<\infty.
\end{align*}
\end{asm}
These hold for many exponentially-tailed families (Gaussian, Poisson, Gamma, Negative Binomial, Laplace, Inverse Gaussian, etc.) when variances/rates have positive lower bounds and probabilities are bounded away from $0$ and $1$, keeping parameters in the interior. Uniform boundedness of $T(x)$ is not needed; the key is finiteness of weighted moments.

\begin{lem}\label{lem:score}\mbox{}\\
Under Assumption~\ref{ass:base},
\(
\norm{u(x\mid\theta)}\le C_J\{\norm{T(x)}+C_\mu\}.
\)
\end{lem}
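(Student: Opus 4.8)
The plan is to work directly from the closed-form expression for the score in the exponential family, namely $u(x\mid\theta)=J_\eta(\theta)^\top\{T(x)-\mu(\theta)\}$, and then bound its Euclidean norm by peeling off the linear map $J_\eta(\theta)^\top$ via submultiplicativity of the induced operator norm, controlling the remaining vector $T(x)-\mu(\theta)$ by the triangle inequality, and finally invoking the uniform bounds $C_J$ and $C_\mu$ supplied by Assumption~\ref{ass:base}(iii). No analytic estimates beyond these elementary norm inequalities are needed.

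\textbf{Key steps.} First I would recall that, under Assumption~\ref{ass:base}(i), $\log f(x\mid\theta)=\log h(x)+\ip{\eta(\theta)}{T(x)}-A(\theta)$ is differentiable in $\theta$, so that $u(x\mid\theta)=\nabla_\theta\log f(x\mid\theta)=J_\eta(\theta)^\top\{T(x)-\mu(\theta)\}$, with $\mu(\theta)=\nabla_\eta A(\eta(\theta))=\E_{F_\theta}[T(X)]$ by the standard mean-value identity for exponential families; this is the expression already displayed in the excerpt, so it can be taken as given. Second, I would apply $\norm{J_\eta(\theta)^\top v}\le\norm{J_\eta(\theta)^\top}\,\norm{v}=\norm{J_\eta(\theta)}\,\norm{v}$ with $v=T(x)-\mu(\theta)$, using that the spectral (Euclidean-induced) operator norm is invariant under transposition. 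Third, the triangle inequality gives $\norm{T(x)-\mu(\theta)}\le\norm{T(x)}+\norm{\mu(\theta)}$. Fourth, taking suprema over $\theta\in\Theta$ and inserting $\sup_\theta\norm{J_\eta(\theta)}=C_J<\infty$ and $\sup_\theta\norm{\mu(\theta)}=C_\mu<\infty$ from Assumption~\ref{ass:base}(iii) yields $\norm{u(x\mid\theta)}\le C_J\{\norm{T(x)}+C_\mu\}$, as claimed, uniformly in $\theta\in\Theta$.

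\textbf{Main obstacle.} Honestly, there is essentially no obstacle: the lemma is a bookkeeping step that isolates the envelope-type control needed later for conditions (ii)–(iv), and the only point meriting a word of care is that the bound on $\norm{J_\eta(\theta)^\top v}$ uses $\norm{J_\eta(\theta)^\top}=\norm{J_\eta(\theta)}$, which holds for the spectral norm but not for every matrix norm; since the excerpt fixes $\norm{\cdot}$ to be the Euclidean norm with its induced operator norm on matrices, this is legitimate. The finiteness of $C_J$ and $C_\mu$ is guaranteed by compactness of $\Theta$ together with continuity of $\eta(\cdot)$ and $\mu(\cdot)=\nabla_\eta A\circ\eta$ on the interior of the natural parameter space (Assumption~\ref{ass:base}(ii)), so the suprema are attained and the inequality is uniform.
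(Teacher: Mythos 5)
Your proof is correct and is essentially the paper's own argument, which simply invokes the factorization $u(x\mid\theta)=J_\eta(\theta)^\top\{T(x)-\mu(\theta)\}$ together with the induced-operator-norm bound and the triangle inequality. Your extra remark that $\norm{J_\eta(\theta)^\top}=\norm{J_\eta(\theta)}$ for the Euclidean-induced (spectral) norm is a legitimate point of care that the paper leaves implicit.
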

\begin{proof}
Immediate from $u(x\mid\theta)=J_\eta(\theta)^\top (T(x)-\mu(\theta))$ and the induced-operator-norm bound.
\end{proof}

\begin{prp}\label{prp:abcd}\mbox{}\\
Under Assumptions~\ref{ass:base}-\ref{ass:envelope}, the following hold (with respect to $\theta$):
\begin{itemize}
\item[(a)] $f(x\mid\theta)^{\alpha}u(x\mid\theta)$ is Lipschitz continuous (for each fixed $x$).
\item[(b)] $f(x\mid\theta)$ is Lipschitz continuous (for each fixed $x$).
\item[(c)] $f(x\mid\theta)^{\alpha}\,\norm{u(x\mid\theta)}$ is integrable (uniformly in $\theta$).
\item[(d)] $f(x\mid\theta)^{2\alpha+1}\,\norm{u(x\mid\theta)}^2$ is integrable (uniformly in $\theta$).
\end{itemize}
\end{prp}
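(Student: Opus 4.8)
The plan is to verify (a)--(d) by differentiating the explicit exponential-family form and then dominating every resulting term by the weighted-moment envelopes in Assumption~\ref{ass:envelope}. The organizing observation is that $f(x\mid\theta)^{\alpha}=h(x)^{\alpha}\exp\{\alpha\ip{\eta(\theta)}{T(x)}-\alpha A(\theta)\}$, so $s(x\mid\theta)=f(x\mid\theta)^{\alpha}u(x\mid\theta)=h(x)^{\alpha}e^{\alpha\ip{\eta(\theta)}{T(x)}-\alpha A(\theta)}J_\eta(\theta)^{\top}\{T(x)-\mu(\theta)\}$, and all $\theta$-derivatives stay inside the same family with natural parameter rescaled by $\alpha$ (this is exactly why $\Lambda_\alpha$ appears). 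For (c) I would bound pointwise using Lemma~\ref{lem:score}: $f(x\mid\theta)^{\alpha}\norm{u(x\mid\theta)}\le C_J\, h(x)^{\alpha}e^{-\alpha A(\theta)}e^{\alpha\ip{\eta(\theta)}{T(x)}}\{\norm{T(x)}+C_\mu\}$; since $e^{-\alpha A(\theta)}$ is bounded on the compact $\Theta$ (continuity of $A$) and $\alpha\eta(\theta)\in\Lambda_\alpha$, the first line of Assumption~\ref{ass:envelope} gives a finite bound uniform in $\theta$. For (d) the analogous computation produces $h(x)^{2\alpha+1}e^{\ip{(2\alpha+1)\eta(\theta)}{T(x)}}$ times a polynomial of degree $2$ in $\norm{T(x)}$ (using $\norm{u}^2\le C_J^2(\norm{T(x)}+C_\mu)^2$), and $(2\alpha+1)\eta(\theta)\in\Lambda_{2\alpha+1}$, so the second line of Assumption~\ref{ass:envelope} applies; the factor $e^{-(2\alpha+1)A(\theta)}$ is again bounded on $\Theta$.

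For the Lipschitz claims (a) and (b) the cleanest route is to show the $\theta$-gradient of each map is bounded \emph{uniformly in $x$ on compacta}, then invoke the mean value inequality on the convex hull of $\Theta$ — or, since $\Theta$ need not be convex, argue Lipschitzness on $\R^p$ via the $C^1$ extension implicit in Assumption~\ref{ass:base}(i)--(ii) together with a uniform gradient bound. Concretely, $\nabla_\theta f(x\mid\theta)=f(x\mid\theta)\,u(x\mid\theta)=f(x\mid\theta)\,J_\eta(\theta)^{\top}\{T(x)-\mu(\theta)\}$, which is $h(x)e^{\ip{\eta(\theta)}{T(x)}-A(\theta)}$ times a factor of order $\norm{T(x)}+C_\mu$; on the compact parameter set this is bounded by a fixed envelope $G(x)=c\,h(x)e^{\ip{\eta^{*}}{T(x)}}\{\norm{T(x)}+C_\mu\}$ for a suitable $\eta^{*}$ in the relevant compact set, and one would note $G$ is locally integrable / finite pointwise so the bound is genuine. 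The delicate point, and the one I expect to be the main obstacle, is (a): differentiating $s(x\mid\theta)$ in $\theta$ yields three groups of terms — one from differentiating $f^{\alpha}$ (bringing down $\alpha J_\eta(\theta)^{\top}T(x)$ and $-\alpha\nabla_\theta A$), one from differentiating $J_\eta(\theta)$, and one from differentiating $\mu(\theta)$; controlling the first group requires an envelope with a \emph{quadratic} factor $\{1+\norm{T(x)}^2\}$ against the weight $h(x)^{\alpha}e^{\ip{\alpha\eta(\theta)}{T(x)}}$, and one must check this is covered. It is, because a quadratic-in-$T$ envelope at natural parameter $\alpha\eta(\theta)$ is dominated (after an elementary $\epsilon$-shift of the natural parameter, absorbing the polynomial into a slightly larger exponential and using interiority) by the finiteness on $\Lambda_{2\alpha+1}\supset$ a neighborhood argument — alternatively, and more simply, one observes that the first line of Assumption~\ref{ass:envelope} with the linear weight already bootstraps to a quadratic weight on a slightly smaller $\Lambda_{\alpha}$ by Hölder/interiority, so no new assumption is needed.

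A subtlety worth flagging explicitly in the write-up: Assumption~\ref{ass:base} bounds $C_J=\sup_\theta\norm{J_\eta(\theta)}$ and $C_\mu=\sup_\theta\norm{\mu(\theta)}$, but for (a) we also need $\sup_\theta$ of the Jacobians of $\theta\mapsto J_\eta(\theta)$ and $\theta\mapsto\mu(\theta)$; the former follows if we additionally read Assumption~\ref{ass:base}(i) as $C^2$ (or state a mild $C^2$ strengthening), and the latter follows from standard exponential-family smoothness (derivatives of $A$ of all orders exist and are continuous in the interior of the natural parameter space, hence bounded on the compact image of $\Theta$). I would state this once at the start of the proof so the three subproofs can each conclude with a single line: "bounded uniform gradient on a compact set $\Rightarrow$ Lipschitz."

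The intended order of execution is: (1) record the explicit forms of $f^{\alpha}$, $u$, and their $\theta$-derivatives; (2) note the boundedness of $e^{-cA(\theta)}$, $\norm{J_\eta}$, $\norm{\mu}$, and their derivatives on $\Theta$; (3) prove (c) and (d) by direct domination using the two lines of Assumption~\ref{ass:envelope}; (4) prove (b) via the uniform bound on $\nabla_\theta f$; (5) prove (a) via the uniform bound on $\nabla_\theta s$, handling the quadratic-in-$T$ term by the interiority/Hölder bootstrap. The hard part is genuinely step (5): it is the only place where differentiating through the power $f^{\alpha}$ raises the polynomial degree in $T(x)$, and one must be careful that the stated envelopes suffice — my expectation is that they do, but the argument needs the small-$\epsilon$ shift of the natural parameter (legitimate by interiority in Assumption~\ref{ass:base}(ii)) to absorb polynomial factors into exponentials, and this is the step most likely to need a supporting sub-lemma.
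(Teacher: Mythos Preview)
Your plan for (c), (d), and (b) is the paper's argument verbatim: Lemma~\ref{lem:score} plus the explicit form $f^{\gamma}=h^{\gamma}e^{\ip{\gamma\eta}{T}-\gamma A}$, bounded $e^{-\gamma A(\theta)}$ on compact $\Theta$, and direct domination by the two lines of Assumption~\ref{ass:envelope}. For (a) the skeleton also matches---differentiate $s$, bound the $\theta$-gradient, apply the mean value theorem---and your flag that this requires $H_\eta$ and $\nabla_\theta\mu$ (hence a tacit $C^2$ reading of Assumption~\ref{ass:base}(i)) is well taken; the paper simply writes $H_\eta(\theta)$ and $V(\theta)$ without comment.

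The one place you go off course is in assessing the difficulty of (a). You treat the quadratic factor $\{1+\norm{T(x)}^2\}$ arising from $\alpha\,u\,u^{\top}$ as something that must be dominated by an \emph{integrable} envelope, and propose a H\"older/interiority bootstrap to manufacture one. But (a) is stated \emph{for each fixed $x$}: when $x$ is fixed, $\norm{T(x)}^2$ is just a number, and the required bound $\sup_{\theta\in\Theta}\norm{\nabla_\theta s(x\mid\theta)}<\infty$ follows immediately from continuity of $\theta\mapsto\nabla_\theta s(x\mid\theta)$ on the compact $\Theta$. No integration, no envelope, no $\epsilon$-shift is needed. (The paper's own invocation of ``envelopes coming from (c)(d)'' at this step is likewise inessential---compactness is doing the work.) Your bootstrap idea, besides being unnecessary here, is also fragile: $\Lambda_\alpha=\alpha\,\eta(\Theta)$ may have empty interior in $\R^k$, in which case the $\epsilon$-shift of the natural parameter is unavailable. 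Save that line of thought for Proposition~\ref{prp:ii}, where an integrable-in-$x$ Lipschitz constant for $\theta\mapsto f^{\alpha+1}u$ is genuinely at stake.
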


\begin{proof}[Proof of (c) and (d)]
Using Lemma~\ref{lem:score} and $f(x\mid\theta)^\gamma = h(x)^\gamma \exp\{\ip{\gamma\eta(\theta)}{T(x)}-\gamma A(\theta)\}$,
with $C_A\coloneqq\sup_{\theta}e^{-\gamma A(\theta)}<\infty$,
\begin{align*}
\int f(x\mid\theta)^\alpha \norm{u(x\mid\theta)} \,dx &\le C_J\!\left( \int f(x\mid\theta)^\alpha \norm{T(x)} \,dx + C_\mu\int f(x\mid\theta)^\alpha \,dx \right)\\
&\le C_J C_A\!\left( \int h(x)^\alpha e^{\ip{\lambda}{T(x)}}\!\norm{T(x)} \,dx
+ C_\mu \int h(x)^\alpha e^{\ip{\lambda}{T(x)}} \,dx\right),
\end{align*}
where $\lambda=\alpha\eta(\theta)\in\Lambda_\alpha$; finiteness follows from the first line of Assumption~\ref{ass:envelope}.
Similarly,
\begin{align*}
&\int f(x\mid\theta)^{2\alpha+1}\,\|u(x\mid\theta)\|^2 \,dx
\le C_J^2 C_A \Bigg(
    \int h(x)^{2\alpha+1} e^{\langle \lambda, T(x)\rangle} \|T(x)\|^2 \,dx \\
&\qquad + 2C_\mu \int h(x)^{2\alpha+1} e^{\langle \lambda, T(x)\rangle} \|T(x)\| \,dx
    + C_\mu^2 \int h(x)^{2\alpha+1} e^{\langle \lambda, T(x)\rangle} \,dx
\Bigg),
\end{align*}
with $\lambda=(2\alpha+1)\eta(\theta)\in\Lambda_{2\alpha+1}$; finiteness follows from the second line of Assumption~\ref{ass:envelope}.
\end{proof}

\begin{proof}[Proof of (b)]
Since $\nabla_\theta f(x\mid\theta) = f(x\mid\theta)\,u(x\mid\theta)$,
\(
\norm{\nabla_\theta f(x\mid\theta)}
\le \big(\sup_{\vartheta\in\Theta} f(x\mid\vartheta)\big)\, C_J(\norm{T(x)}+C_\mu)<\infty,
\)
where $\sup_\Theta f(x\mid\theta)<\infty$ (for each fixed $x$) follows from compactness of $\Theta$ and continuity.
The mean value theorem yields Lipschitzness.
\end{proof}

\begin{proof}[Proof of (a)]
$\nabla_\theta(f(x\mid\theta)^\alpha u(x\mid\theta))=f(x\mid\theta)^\alpha\{\alpha\,u(x\mid\theta)\,u(x\mid\theta)^\top+\nabla_\theta u(x\mid\theta)\}$ and
$\nabla_\theta u(x\mid\theta) = H_\eta(\theta)^\top(T(x)-\mu(\theta))-J_\eta(\theta)^\top V(\theta) J_\eta(\theta)$ (with $V(\theta)=\Var_{F_\theta}[T(X)]$).
Under Assumption~\ref{ass:base}, $J_\eta,H_\eta,V,\mu$ are bounded; using the first/second-order envelopes coming from (c)(d), which control $f(x\mid\theta)^\alpha, f(x\mid\theta)^\alpha\norm{T(x)}, f(x\mid\theta)^\alpha\norm{T(x)}^2$, we get
$\sup_{\theta\in\Theta}\norm{\nabla_\theta(f(x\mid\theta)^\alpha u(x\mid\theta))}<\infty$ and thus Lipschitzness by the mean value theorem.
\end{proof}

\medskip
\begin{prp}[Lipschitz gradient: (ii)]\label{prp:ii}\mbox{}\\
Let
\[
\nabla_\theta\mathcal{L}_w(\theta)=\sum_{i=1}^n w_i\, f(x_i\mid\theta)^\alpha u(x_i\mid\theta)
\;+\; \E_{F_\theta}\!\left[f(X\mid\theta)^\alpha u(X\mid\theta)\right].
\]
Under Assumptions~\ref{ass:base}-\ref{ass:envelope}, $\nabla_\theta\mathcal{L}_w(\theta)$ is Lipschitz continuous on $\Theta$.
\end{prp}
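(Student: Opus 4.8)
The plan is to decompose $\nabla_\theta\mathcal{L}_w(\theta)$ into its two structural pieces — the finite weighted sum $\sum_{i=1}^n w_i\, s(x_i\mid\theta)$ over the observed data, and the expectation term $\E_{F_\theta}[s(X\mid\theta)]$ where $s(x\mid\theta)=f(x\mid\theta)^\alpha u(x\mid\theta)$ — and prove each is Lipschitz on $\Theta$ separately, then combine by the triangle inequality.

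For the finite-sum piece, the work is essentially already done: by Proposition~\ref{prp:abcd}(a), for each fixed $x_i$ the map $\theta\mapsto s(x_i\mid\theta)$ is Lipschitz on $\Theta$, say with constant $L_i$, so $\sum_{i=1}^n w_i\, s(x_i\mid\theta)$ is Lipschitz with constant $\sum_{i=1}^n w_i L_i$ (finite, and since the Dirichlet weights satisfy $\sum_i w_i = 1$, bounded by $\max_i L_i$). The only subtlety is that the proof of (a) shows $\sup_{\theta\in\Theta}\norm{\nabla_\theta s(x\mid\theta)}<\infty$ for each fixed $x$ with a bound depending on $x$ through $f(x\mid\theta)^\alpha$, $f(x\mid\theta)^\alpha\norm{T(x)}$ and $f(x\mid\theta)^\alpha\norm{T(x)}^2$; I would just invoke the mean value theorem on the (compact, hence connected after passing to a convex hull or arguing along segments if $\Theta$ is convex — or more carefully via a path argument) parameter set to convert the gradient bound into a Lipschitz bound, exactly as in the proof of (a).

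The genuine content is the expectation term $\theta\mapsto m(\theta)\coloneqq\E_{F_\theta}[s(X\mid\theta)] = \int f(x\mid\theta)^{\alpha+1} u(x\mid\theta)\,dx$. Here $\theta$ enters both through the integrand and through the measure, so I would write $m(\theta)=\int h(x)^{\alpha+1}\exp\{\ip{(\alpha+1)\eta(\theta)}{T(x)} - (\alpha+1)A(\theta)\}\, J_\eta(\theta)^\top(T(x)-\mu(\theta))\,dx$ and differentiate under the integral sign. Differentiation under the integral is justified by the uniform-in-$\theta$ envelope: the $\theta$-derivative of the integrand is dominated by an integrable function, which follows from Proposition~\ref{prp:abcd}(c)--(d) and Assumption~\ref{ass:envelope} (noting $(\alpha+1)\le 2\alpha+1$ for $\alpha>0$, so the relevant natural-parameter points lie in $\Lambda_{\alpha+1}$, contained in the compact interior region already covered by the envelope hypotheses, possibly after enlarging $\Lambda_\gamma$ or adding $\gamma=\alpha+1$ to the envelope condition). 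Carrying out the differentiation produces $\nabla_\theta m(\theta)$ as a sum of terms each of the form $\int f(x\mid\theta)^{\alpha+1}\times(\text{polynomial in }T(x)\text{ of degree}\le 2)\times(\text{bounded matrix functions of }\theta)\,dx$, with the bounded matrix functions being $J_\eta$, $H_\eta$, $\mu$, $V$, all controlled under Assumption~\ref{ass:base}. Each such integral is finite and — crucially — bounded uniformly over $\theta\in\Theta$ by the envelope bounds, so $\sup_{\theta\in\Theta}\norm{\nabla_\theta m(\theta)}<\infty$, and the mean value theorem again gives Lipschitzness of $m$.

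Adding the two Lipschitz constants finishes the proof. The main obstacle — the one step I would spell out in full rather than wave at — is the dominated-differentiation step for $m(\theta)$: getting a single integrable envelope that simultaneously dominates the integrand and all its relevant $\theta$-partials uniformly over the compact parameter set, while keeping careful track of which power $\gamma$ of $h(x)e^{\ip{\eta(\theta)}{T(x)}}$ appears (here $\gamma=\alpha+1$, which is strictly between $\alpha$ and $2\alpha+1$, so one should confirm the envelope assumptions as stated cover it, or state the mild additional assumption $\sup_{\lambda\in\Lambda_{\alpha+1}}\int h(x)^{\alpha+1}e^{\ip{\lambda}{T(x)}}\{1+\norm{T(x)}^2\}\,dx<\infty$, which is implied by log-convexity of $\gamma\mapsto\int h^\gamma e^{\ip{\gamma\eta}{T}}(1+\norm{T}^2)$ together with the two endpoint bounds in Assumption~\ref{ass:envelope}). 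Everything else is bookkeeping with the triangle inequality, the operator-norm bound from Lemma~\ref{lem:score}, and the mean value theorem.
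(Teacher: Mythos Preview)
Your proposal is correct and follows essentially the same decomposition as the paper: the finite-sum part is immediate from Proposition~\ref{prp:abcd}(a), and the expectation term $m(\theta)=\int f(x\mid\theta)^{\alpha+1}u(x\mid\theta)\,dx$ is controlled via the envelope bounds. The paper bounds $m(\theta)-m(\theta')$ directly by integrating the pointwise Lipschitz bound on the integrand (obtained from (a)--(b), with integrability from (c)) rather than differentiating under the integral sign, but the two techniques are equivalent here, and you are in fact more explicit than the paper about the $\gamma=\alpha+1$ envelope issue that both arguments rely on.
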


\begin{proof}
The sample part is Lipschitz by (a). For the expectation part, write
\[
\left\|\E_{F_\theta}[s(X\mid\theta)]-\E_{F_{\theta'}}[s(X\mid\theta')]\right\|
\le \int \|f(x\mid\theta)^{\alpha+1}u(x\mid\theta)-f(x\mid\theta')^{\alpha+1}u(x\mid\theta')\|\,dx,
\]
and then use (a)–(b) pointwise together with the integrability control in (c) to apply the mean value bound under the integral. This yields the stated Lipschitz bound.
\end{proof}

\begin{prp}[Unbiasedness: (iii)]\label{lem:unbiasedness}\mbox{}\\
Let $\xi^{(t)}=\{z_1^{(t)},\dots,z_m^{(t)}\}$ be i.i.d.\ from $F_{\theta^{(t)}}$. Define
\[
g(\theta^{(t)}\mid\xi^{(t)})=
-\sum_{i=1}^n w_i f(x_i\mid\theta^{(t)})^\alpha u(x_i\mid\theta^{(t)})
+\frac{1}{m}\sum_{j=1}^m f(z_j^{(t)}\mid\theta^{(t)})^\alpha u(z_j^{(t)}\mid\theta^{(t)}).
\]
Then $\E_{\xi^{(t)}}[g(\theta^{(t)}\mid\xi^{(t)})]=\nabla_\theta\mathcal{L}_w(\theta^{(t)})$.
\end{prp}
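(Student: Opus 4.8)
The plan is to exploit linearity of expectation after conditioning on everything that is not the fresh pseudo-sample $\xi^{(t)}$. Throughout the argument I treat $\theta^{(t)}$ and the Dirichlet weights $w_1,\dots,w_n$ as fixed — this is precisely the meaning of the notation $\E_{\xi^{(t)}}$ — so the only randomness being averaged is the i.i.d.\ draw $z_1^{(t)},\dots,z_m^{(t)}\sim F_{\theta^{(t)}}$.

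First I would note that the data-fit term $-\sum_{i=1}^n w_i f(x_i\mid\theta^{(t)})^\alpha u(x_i\mid\theta^{(t)})$ is a deterministic function of $(\theta^{(t)},w_{1:n},x_{1:n})$ and therefore passes through $\E_{\xi^{(t)}}$ unchanged. For the Monte Carlo term, Proposition~\ref{prp:abcd}(c) (valid under Assumptions~\ref{ass:base}--\ref{ass:envelope}) guarantees that $x\mapsto f(x\mid\theta^{(t)})^\alpha\norm{u(x\mid\theta^{(t)})}$ is $F_{\theta^{(t)}}$-integrable, so $\E_{F_{\theta^{(t)}}}[f(X\mid\theta^{(t)})^\alpha u(X\mid\theta^{(t)})]$ is a well-defined finite vector and each summand $f(z_j^{(t)}\mid\theta^{(t)})^\alpha u(z_j^{(t)}\mid\theta^{(t)})$ has exactly this mean. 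Linearity of expectation then gives
\begin{align*}
&\E_{\xi^{(t)}}\!\left[\frac{1}{m}\sum_{j=1}^m f(z_j^{(t)}\mid\theta^{(t)})^\alpha u(z_j^{(t)}\mid\theta^{(t)})\right]\\
&\qquad=\E_{F_{\theta^{(t)}}}\!\left[f(X\mid\theta^{(t)})^\alpha u(X\mid\theta^{(t)})\right],
\end{align*}
with the right-hand side independent of $m$; in particular $m=1$ already yields unbiasedness.

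Adding the two contributions, I would obtain
\begin{align*}
\E_{\xi^{(t)}}[g(\theta^{(t)}\mid\xi^{(t)})]
&=-\sum_{i=1}^n w_i f(x_i\mid\theta^{(t)})^\alpha u(x_i\mid\theta^{(t)})\\
&\quad+\E_{F_{\theta^{(t)}}}\!\left[f(X\mid\theta^{(t)})^\alpha u(X\mid\theta^{(t)})\right],
\end{align*}
and the last step is to identify this right-hand side with $\nabla_\theta\mathcal{L}_w(\theta^{(t)})$. That identification uses the closed form $\nabla_\theta q(\theta,x)=-f(x\mid\theta)^\alpha u(x\mid\theta)+\E_{F_\theta}[f(X\mid\theta)^\alpha u(X\mid\theta)]$, obtained by differentiating \eqref{eq:dpd_loss} under the integral sign — a step again dominated via the envelopes in Assumption~\ref{ass:envelope} — together with the fact that the Dirichlet weights satisfy $\sum_{i=1}^n w_i=1$, so that $\nabla_\theta\mathcal{L}_w(\theta)=\sum_{i=1}^n w_i\nabla_\theta q(\theta,x_i)$ carries a single, unweighted copy of the expectation term.

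I do not expect a genuine obstacle here: the claim is at bottom ``the sample mean is an unbiased estimator'' combined with linearity of expectation. The only points requiring care are (a) making explicit that $\E_{\xi^{(t)}}$ is a conditional expectation given $\theta^{(t)}$ and $w_{1:n}$, so that the data-fit sum and the weights are not themselves averaged, and (b) citing Proposition~\ref{prp:abcd}(c) to certify finiteness of all the expectations in play and to legitimise differentiation under the integral sign in the derivation of $\nabla_\theta\mathcal{L}_w$. Once these bookkeeping matters are flagged, the proof is only a few lines.
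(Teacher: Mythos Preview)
Your proposal is correct and follows essentially the same route as the paper's own proof: split $g$ into the deterministic data-fit sum and the Monte Carlo average, pass the former through $\E_{\xi^{(t)}}$ unchanged, and use linearity of expectation plus the i.i.d.\ sampling to reduce the latter to $\E_{F_{\theta^{(t)}}}[f(X\mid\theta^{(t)})^\alpha u(X\mid\theta^{(t)})]$. The paper simply invokes the gradient formula ``by definition'' and applies linearity, whereas you additionally flag Proposition~\ref{prp:abcd}(c) for integrability and the identity $\sum_i w_i=1$ to collapse the weighted expectation terms into a single copy; these are helpful bookkeeping remarks but do not change the argument.
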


\begin{proof}
By definition,
\[
\nabla_\theta\mathcal{L}_w(\theta)=
-\sum_{i=1}^n w_i f(x_i\mid\theta)^\alpha u(x_i\mid\theta)
\;+\; \E_{F_\theta}\!\left[f(X\mid\theta)^\alpha u(X\mid\theta)\right].
\]
Taking expectation with respect to $\xi^{(t)}$ and using linearity of expectation,
\[
\E_{\xi^{(t)}}\!\left[\frac{1}{m}\sum_{j=1}^m f(z_j^{(t)}\mid\theta^{(t)})^\alpha u(z_j^{(t)}\mid\theta^{(t)})\right]
=\E_{F_{\theta^{(t)}}}\!\left[f(X\mid\theta^{(t)})^\alpha u(X\mid\theta^{(t)})\right],
\]
since $z_j^{(t)}\stackrel{\text{i.i.d.}}{\sim}F_{\theta^{(t)}}$. Combining the fixed (non-random) data term and the expectation of the model term yields
$\E_{\xi^{(t)}}[g(\theta^{(t)}\mid\xi^{(t)})]=\nabla_\theta\mathcal{L}_w(\theta^{(t)})$.
\end{proof}

\begin{prp}[GLM version of (iii)]\label{lem:unbiasedness_glm}\mbox{}\\
In the GLM setting with observation-specific distributions $F_{i,\theta}$, define
\[
g(\theta^{(t)}\mid\xi^{(t)})=
\sum_{i=1}^n w_i\Biggl\{-f(y_i\mid \mathbf{x}_i,\theta^{(t)})^\alpha u(y_i\mid \mathbf{x}_i,\theta^{(t)})
+\frac{1}{m}\sum_{j=1}^m f(z_{ij}\mid \mathbf{x}_i,\theta^{(t)})^\alpha u(z_{ij}\mid \mathbf{x}_i,\theta^{(t)})\Biggr\},
\]
where $z_{ij}\sim F_{i,\theta^{(t)}}$ independently over $i,j$. Then $\E_{\xi^{(t)}}[g(\theta^{(t)}\mid\xi^{(t)})]=\nabla_\theta\mathcal{L}_w(\theta^{(t)})$.
\end{prp}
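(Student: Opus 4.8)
The plan is to mirror the proof of Proposition~\ref{lem:unbiasedness} (the single-integral case), but carried out observation by observation, since in the GLM objective the expectation term $\E_{F_{i,\theta}}[f(Y\mid \mathbf{x}_i,\theta)^\alpha u(Y\mid \mathbf{x}_i,\theta)]$ is attached to observation $i$ and is multiplied by the same bootstrap weight $w_i$ as the corresponding data-fit term. First I would write $g(\theta^{(t)}\mid\xi^{(t)})=\sum_{i=1}^n g(\theta^{(t)}\mid\xi_i^{(t)})$ with the per-observation component exactly as in the statement, and observe that the data-fit part $-w_i f(y_i\mid \mathbf{x}_i,\theta^{(t)})^\alpha u(y_i\mid \mathbf{x}_i,\theta^{(t)})$ is, conditional on $\theta^{(t)}$ and the Dirichlet weights, non-random; hence it passes unchanged through $\E_{\xi^{(t)}}$.

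Next I would treat the stochastic part. For each fixed $i$, the draws $z_{ij}^{(t)}\sim F_{i,\theta^{(t)}}$ are i.i.d.\ over $j=1,\dots,m$, so by linearity of expectation
\[
\E_{\xi_i^{(t)}}\!\left[\frac{1}{m}\sum_{j=1}^m f(z_{ij}^{(t)}\mid \mathbf{x}_i,\theta^{(t)})^\alpha u(z_{ij}^{(t)}\mid \mathbf{x}_i,\theta^{(t)})\right]
=\E_{F_{i,\theta^{(t)}}}\!\left[f(Y\mid \mathbf{x}_i,\theta^{(t)})^\alpha u(Y\mid \mathbf{x}_i,\theta^{(t)})\right],
\]
which is well defined because the integrand is integrable by Proposition~\ref{prp:abcd}(c), applied to the observation-specific family $F_{i,\theta}$ (the envelope conditions of Assumption~\ref{ass:envelope} being read per observation). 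Since the sample families $\{\xi_i^{(t)}\}_{i=1}^n$ are generated independently across $i$, the expectation over the full $\xi^{(t)}$ distributes across the sum, giving $\E_{\xi^{(t)}}[g(\theta^{(t)}\mid\xi_i^{(t)})]=\E_{\xi_i^{(t)}}[g(\theta^{(t)}\mid\xi_i^{(t)})]$ for every $i$.

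Summing over $i$ and recombining the non-random data-fit terms with the expectations of the stochastic terms then yields
\[
\E_{\xi^{(t)}}[g(\theta^{(t)}\mid\xi^{(t)})]
=\sum_{i=1}^n w_i\bigl\{-f(y_i\mid \mathbf{x}_i,\theta^{(t)})^\alpha u(y_i\mid \mathbf{x}_i,\theta^{(t)})+\E_{F_{i,\theta^{(t)}}}[f(Y\mid \mathbf{x}_i,\theta^{(t)})^\alpha u(Y\mid \mathbf{x}_i,\theta^{(t)})]\bigr\},
\]
and the right-hand side is precisely the gradient formula for $\mathcal{L}_w$ derived in the GLM subsection, so the claim follows. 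The argument is genuinely nothing more than linearity of expectation together with the independence structure of the pseudo-samples across $i$ and $j$, so I do not expect a real obstacle; the only points needing care — and the only places where the earlier results are actually invoked — are (i) confirming that each per-observation expectation is finite via Proposition~\ref{prp:abcd}(c), and (ii) justifying that $\nabla_\theta$ and the integral defining $q(\theta,(y_i,\mathbf{x}_i))$ may be interchanged, so that the target gradient has the stated closed form in the first place.
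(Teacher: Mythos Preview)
Your proposal is correct and follows essentially the same approach as the paper's proof: compute the expectation of each per-observation Monte Carlo average by linearity and the i.i.d.\ structure of the $z_{ij}$, note that the data-fit terms are deterministic given $\theta^{(t)}$ and the weights, and sum over $i$ to recover the GLM gradient of $\mathcal{L}_w$. Your added remarks on integrability via Proposition~\ref{prp:abcd}(c) and on the gradient-integral interchange are slightly more explicit than the paper's own argument but do not change the route.
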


\begin{proof}
For each $i$, by independence and identical distribution $z_{ij}\sim F_{i,\theta^{(t)}}$,
\[
\E_{\xi^{(t)}}\!\left[\frac{1}{m}\sum_{j=1}^m f(z_{ij}\mid \mathbf{x}_i,\theta^{(t)})^\alpha u(z_{ij}\mid \mathbf{x}_i,\theta^{(t)})\right]
=\E_{F_{i,\theta^{(t)}}}\!\left[f(Y\mid \mathbf{x}_i,\theta^{(t)})^\alpha u(Y\mid \mathbf{x}_i,\theta^{(t)})\right].
\]
Summing over $i$ with weights $w_i$ gives
\[
\E_{\xi^{(t)}}[g(\theta^{(t)}\mid\xi^{(t)})]
=\sum_{i=1}^n w_i\!\left\{-f(y_i\mid \mathbf{x}_i,\theta^{(t)})^\alpha u(y_i\mid \mathbf{x}_i,\theta^{(t)})
+\E_{F_{i,\theta^{(t)}}}[f(Y\mid \mathbf{x}_i,\theta^{(t)})^\alpha u(Y\mid \mathbf{x}_i,\theta^{(t)})]\right\},
\]
which equals $\nabla_\theta\mathcal{L}_w(\theta^{(t)})$ by the GLM definition of $\mathcal{L}_w$.
\end{proof}

\begin{prp}[Second-moment bound: (iv)]\label{prp:iv}\mbox{}\\
For i.i.d.\ $\xi^{(t)}=\{z_1^{(t)},\dots,z_m^{(t)}\}\sim F_{\theta^{(t)}}$,
\[
\mathbb{E}_{\xi^{(t)}}\!\left[
\left\|\frac{1}{m}\sum_{j=1}^m s(z_j^{(t)}\mid\theta^{(t)})
-\E_{F_{\theta^{(t)}}}[s(X\mid\theta^{(t)})]\right\|^2\right]
=\frac{1}{m}\,\mathrm{tr}\big(\Sigma_{\theta^{(t)}}\big)
\le \frac{1}{m}\,\E_{F_{\theta^{(t)}}}\!\left[\norm{s(X\mid\theta^{(t)})}^2\right]<\infty,
\]
where $\Sigma_{\theta}\coloneqq\Var_{F_\theta}(s(X\mid\theta))$.
\end{prp}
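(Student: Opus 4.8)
The plan is to prove Proposition~\ref{prp:iv} in two moves: first an exact variance identity for a sample mean of i.i.d.\ vectors, then a qualitative finiteness claim that the per-draw second moment $\E_{F_{\theta^{(t)}}}[\norm{s(X\mid\theta^{(t)})}^2]$ is finite, invoking part~(d) of Proposition~\ref{prp:abcd}. Write $\bar S_m \coloneqq m^{-1}\sum_{j=1}^m s(z_j^{(t)}\mid\theta^{(t)})$ and $\mu_s \coloneqq \E_{F_{\theta^{(t)}}}[s(X\mid\theta^{(t)})]$, so the quantity to compute is $\E\norm{\bar S_m - \mu_s}^2 = \E\,\mathrm{tr}\big[(\bar S_m-\mu_s)(\bar S_m-\mu_s)^\top\big] = \mathrm{tr}\big(\Var(\bar S_m)\big)$. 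Since the $z_j^{(t)}$ are i.i.d., $\Var(\bar S_m) = m^{-1}\Var_{F_{\theta^{(t)}}}(s(X\mid\theta^{(t)})) = m^{-1}\Sigma_{\theta^{(t)}}$, which gives the stated equality $\E_{\xi^{(t)}}\norm{\bar S_m - \mu_s}^2 = m^{-1}\mathrm{tr}(\Sigma_{\theta^{(t)}})$.

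Next I would bound $\mathrm{tr}(\Sigma_\theta)$ by the uncentered second moment: for any random vector $Z$ with mean $\mu_Z$, $\mathrm{tr}(\Var(Z)) = \E\norm{Z}^2 - \norm{\mu_Z}^2 \le \E\norm{Z}^2$. Applying this with $Z = s(X\mid\theta^{(t)})$ yields $m^{-1}\mathrm{tr}(\Sigma_{\theta^{(t)}}) \le m^{-1}\E_{F_{\theta^{(t)}}}[\norm{s(X\mid\theta^{(t)})}^2]$. It then remains to note that this last expectation is finite and, since the argument holds for every $\theta^{(t)}\in\Theta$, that it is bounded uniformly in $\theta^{(t)}$ — which is exactly the content needed to produce a single constant $v$ satisfying condition~(iv). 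Since $s(x\mid\theta) = f(x\mid\theta)^\alpha u(x\mid\theta)$, we have $\norm{s(x\mid\theta)}^2 = f(x\mid\theta)^{2\alpha}\norm{u(x\mid\theta)}^2$, and so $\E_{F_\theta}[\norm{s(X\mid\theta)}^2] = \int f(x\mid\theta)^{2\alpha+1}\norm{u(x\mid\theta)}^2\,dx$, which is finite uniformly in $\theta$ by Proposition~\ref{prp:abcd}(d). Combining, $\E_{\xi^{(t)}}\norm{\bar S_m - \mu_s}^2 \le v$ with $v \coloneqq m^{-1}\sup_{\theta\in\Theta}\int f(x\mid\theta)^{2\alpha+1}\norm{u(x\mid\theta)}^2\,dx < \infty$.

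To close the loop with assumption~(iv) as stated for $g$ rather than for $\bar S_m$, I would observe that $g(\theta^{(t)}\mid\xi^{(t)}) - \nabla_\theta\mathcal{L}_w(\theta^{(t)}) = \bar S_m - \mu_s$: the data-dependent term $-\sum_i w_i f(x_i\mid\theta^{(t)})^\alpha u(x_i\mid\theta^{(t)})$ is non-random given $\theta^{(t)}$ and appears identically in both $g$ and $\nabla_\theta\mathcal{L}_w$ (by Proposition~\ref{lem:unbiasedness}), hence cancels in the difference. Therefore the second-moment bound for $\bar S_m - \mu_s$ transfers verbatim to $\E_{\xi^{(t)}}[\norm{g(\theta^{(t)}\mid\xi^{(t)}) - \nabla_\theta\mathcal{L}_w(\theta^{(t)})}^2] \le v$, establishing~(iv).

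The only genuinely delicate point is the finiteness/uniformity of $\int f(x\mid\theta)^{2\alpha+1}\norm{u(x\mid\theta)}^2\,dx$ over $\theta\in\Theta$, but this is precisely Proposition~\ref{prp:abcd}(d), which we may assume; everything else is the exact i.i.d.\ variance decomposition and the elementary inequality $\mathrm{tr}(\Var(Z))\le\E\norm{Z}^2$. For the GLM version one would run the same argument componentwise: $g(\theta^{(t)}\mid\xi^{(t)}) - \nabla_\theta\mathcal{L}_w(\theta^{(t)}) = \sum_i w_i(\bar S_{m,i} - \mu_{s,i})$ where $\bar S_{m,i}$ averages $m$ draws from $F_{i,\theta^{(t)}}$; by independence across $i$ the cross terms vanish in the second moment, giving $\sum_i w_i^2 m^{-1}\mathrm{tr}(\Sigma_{i,\theta^{(t)}}) \le m^{-1}\big(\sum_i w_i^2\big)\sup_{i,\theta}\int f(y\mid x_i,\theta)^{2\alpha+1}\norm{u(y\mid x_i,\theta)}^2\,dy$, which is again finite and bounded uniformly (using $\sum_i w_i^2 \le 1$ for Dirichlet weights).
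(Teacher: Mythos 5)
Your proof is correct and follows essentially the same route as the paper's: the exact i.i.d.\ variance identity $\E\|\bar S_m-\mu_s\|^2=m^{-1}\mathrm{tr}(\Sigma_{\theta^{(t)}})$, the elementary bound $\mathrm{tr}(\Var(Z))\le\E\|Z\|^2$, and finiteness via Proposition~\ref{prp:abcd}(d). The additional remarks you include (the cancellation of the non-random data term so the bound transfers to $g-\nabla_\theta\mathcal{L}_w$, and the GLM extension using independence across $i$ and $\sum_i w_i^2\le 1$) go slightly beyond what the paper writes out but are consistent with it and correct.
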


\begin{proof}
Let $S_j\coloneqq s(z_j^{(t)}\mid\theta^{(t)})$ and $\bar S_m\coloneqq \frac{1}{m}\sum_{j=1}^m S_j$.
By independence and identical distribution,
\[
\Var(\bar S_m)=\frac{1}{m}\Var(S_1)=\frac{1}{m}\,\Sigma_{\theta^{(t)}}.
\]
Taking the squared norm in expectation gives
\[
\E\big[\|\bar S_m-\E S_1\|^2\big]
=\E\big[(\bar S_m-\E S_1)^\top(\bar S_m-\E S_1)\big]
=\mathrm{tr}\big(\Var(\bar S_m)\big)
=\frac{1}{m}\,\mathrm{tr}(\Sigma_{\theta^{(t)}}).
\]
Finally, $\mathrm{tr}(\Sigma_{\theta})\le \E\|S_1\|^2$ (since $\mathrm{tr}(\Var Z)=\sum_\ell \Var(Z_\ell)\le \sum_\ell \E[Z_\ell^2]=\E\|Z\|^2$), hence
\[
\E\big[\|\bar S_m-\E S_1\|^2\big]\le \frac{1}{m}\,\E\|S_1\|^2
=\frac{1}{m}\,\E_{F_{\theta^{(t)}}}\!\left[\|s(X\mid\theta^{(t)})\|^2\right].
\]
Finiteness follows from Proposition~\ref{prp:abcd}(d), which ensures $\sup_{\theta\in\Theta}\E_{F_\theta}\|s(X\mid\theta)\|^2<\infty$.
\end{proof}
}

\section{DPD-based Posterior Inference for Intractable Models}\label{a2}
This section examines several distributions in which the DPD-based posterior lacks a closed form. We present the density plots for these distributions when applied to contaminated data. These distributions also appeared in the numerical experiments in \cite{okuno2024minimizing}. The R code used for all the analyses in this section is available in the GitHub repository: \url{https://github.com/naruki-sonobe/DPD-Bayes-SGD}.

\paragraph{Inverse Gaussian Distribution.}
First, we consider a scenario in which the true data-generating process is an Inverse Gaussian distribution $IG(1, 3)$ but the observed data contain contamination from a different distribution
\[
x_1, \dots, x_n \overset{iid}{\sim} 0.95\,IG(1,3) + 0.05\,N(10,0.01).
\]
We adopt the Inverse Gaussian distribution $IG(\mu,\lambda)$ to model the data and estimate the parameter vector $\theta=(\mu,\lambda)$. The initial values for our method are set to the maximum likelihood estimator.
\[
\hat{\mu} = \frac{1}{n} \sum_{i=1}^n x_i, \quad \hat{\lambda} = \frac{n}{\sum_{i=1}^n \{x_i^{-1} - \hat{\mu}^{-1}\}}.
\]

\paragraph{Gompertz Distribution.}
Similarly, we examine the case in which the true data-generating process is a Gompertz distribution, $Gompertz(1, 0.1)$; however, the observed data contain the following contamination:
\[
x_1, \dots, x_n \overset{iid}{\sim} 0.95\,Gompertz(1,0.1) + 0.05\,N(10,0.01).
\]
We adopt the Gompertz distribution $Gompertz(\omega,\lambda)$ to model the data and estimate the parameter vector $\theta=(\omega,\lambda)$. In our method, the initial values are set to the maximum likelihood estimator that satisfies
\begin{align*}
&\hat{\lambda}
= -\,\frac{n \hat{\omega}}{
    \,\sum_{j=1}^n \bigl(1 - \exp(\hat{\omega}\,x_j)\bigr)
  },\\
  &\sum_{i=1}^n x_i
\;+\;
\frac{n}{
  \sum_{j=1}^n \bigl(1 - \exp(\hat{\omega}\,x_j)\bigr)
}
\;\sum_{i=1}^n \Bigl\{
  \frac{1 - \exp(\hat{\omega}\,x_i)}{\hat{\omega}}
  \;+\;
  x_i\,\exp\bigl(\hat{\omega}\,x_i\bigr)
\Bigr\}
\;=\;0.
\end{align*}

\begin{figure}[H]
    \centering
    \begin{minipage}{0.45\textwidth}
        \centering
        \includegraphics[width=\textwidth]{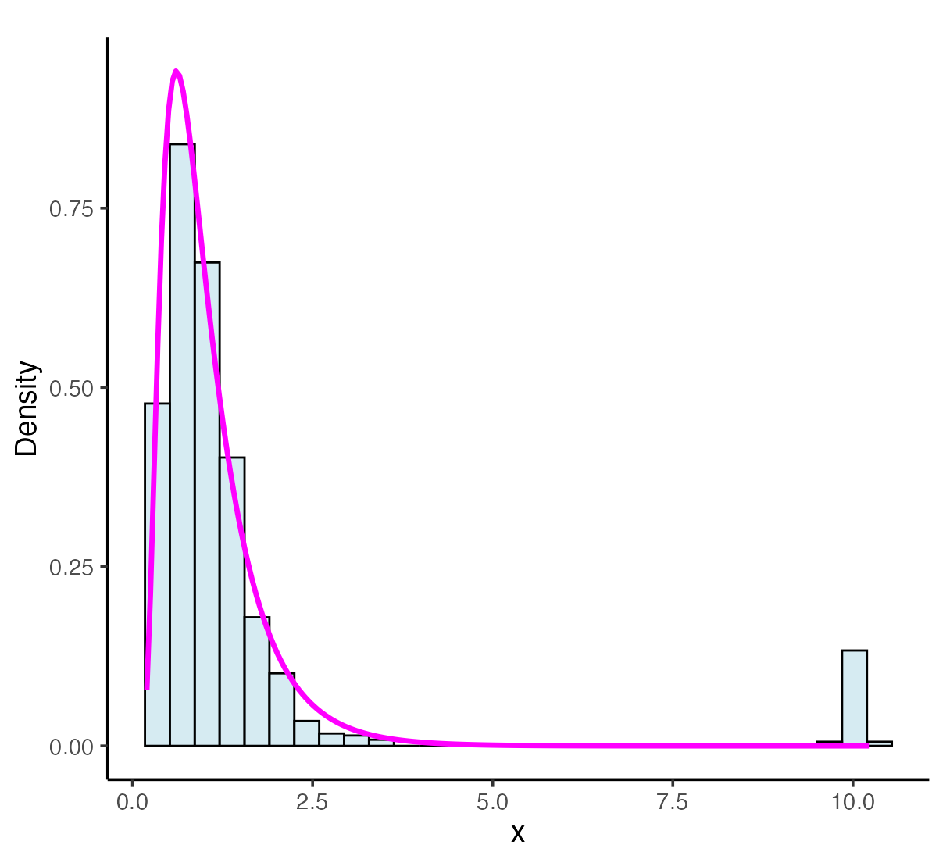} 
    \end{minipage}
    \begin{minipage}{0.45\textwidth}
        \centering
        \includegraphics[width=\textwidth]{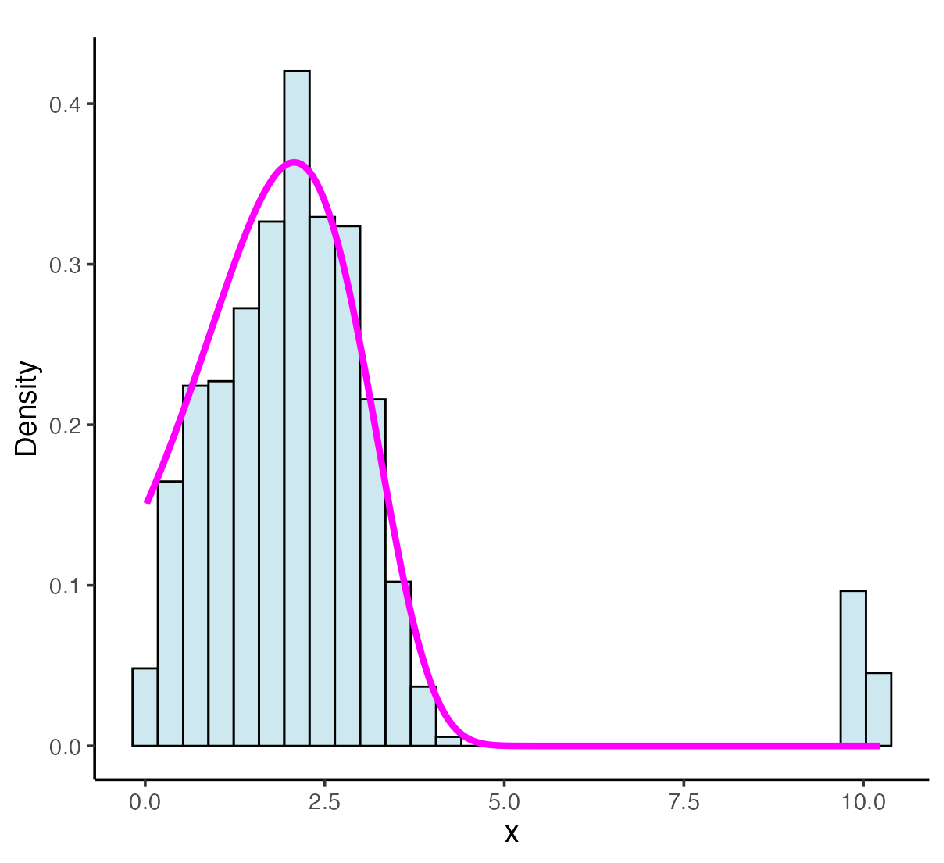}
    \end{minipage}
    \caption{Density estimation for contaminated data. The figures demonstrate that the DPD-based approach (magenta curve) provides robust estimation, effectively mitigating the influence of outliers. Left: Inverse Gaussian distribution. Right: Gompertz distribution.}
    \label{densPlotIntractable}
\end{figure}

The density plots in Figure~\ref{densPlotIntractable} illustrate the robustness of the DPD-based posterior inference in the presence of outliers. For both the Inverse Gaussian and Gompertz distributions, our method appears successfully to recovered the true data-generating distribution despite a 5\% contamination. This demonstrates that the DPD-based approach effectively reduces the influence of outliers while maintaining estimation accuracy for the primary distribution component. Notably, our approach offers significant advantages for models in which the DPD-based posterior cannot be expressed in a closed form. Even in these analytically intractable cases, our computational method enables a robust estimation that remains insensitive to outliers, as is clearly demonstrated by the results of both distributions. 

\section{Sensitivity to $\alpha$ and $m$, and sample size effects}\label{a3}
\subsection{Parametric models}
We conducted extensive simulation studies to evaluate the performance of the proposed DPD-based posterior method across different parametric models under data contamination scenarios. For this analysis, we focused on two common distributional families: the normal model (with an unknown mean \(\mu\) and standard deviation \(\sigma\)) and the Poisson model (with unknown rate parameter \(\lambda\)).

To implement the DPD-based posterior inference, we used robust initial estimates to ensure stability under contamination. For the normal model, we set the initial value of \(\mu\) as the median of the data and the initial value of \(\sigma\) to be as the median absolute deviation (MAD) of the data. For the Poisson model, we used the median of the data as the initial value of \(\lambda\).

For the normal model, we generated data from a contaminated distribution \(0.95 N(0, 1) + 0.05 N(10, 0.01)\), where the majority (95\%) of observations come from a standard normal distribution with true parameters \(\mu=0\) and \(\sigma=1\), whereas a small fraction (5\%) represents outliers from a distribution with mean 10 and very small variance. For the Poisson model, we used a contaminated distribution \(0.95 \text{Po}(1) + 0.05 \text{Po}(10)\), where 95\% of the observations come from a Poisson distribution with a true rate parameter \(\lambda=1\), and 5\% are outliers from a Poisson distribution with a rate parameter of 10. Our goal was to examine how well the DPD-based posterior method could recover the true underlying parameters (\(\mu=0\), \(\sigma=1\) for normal distribution, and \(\lambda=1\) for Poisson distribution) in the presence of these contaminations.

\begin{table}[h]
\centering
\caption{Performance of the proposed DPD-based posterior method for Normal and Poisson models under different sample sizes over 1,000 simulations. All experiments used \(m=10\) particles.}
\label{tab:example1}
\begin{tabular}{|r|rrr|rrr|rrr|}
\hline
 & \multicolumn{6}{c|}{Normal} & \multicolumn{3}{c|}{Poisson} \\
\cmidrule{2-10}
 & \multicolumn{3}{c|}{\(\mu\)} & \multicolumn{3}{c|}{\(\sigma\)} & \multicolumn{3}{c|}{\(\lambda\)} \\
\cmidrule{2-10}
\(n\) & MSE & CP & AL & MSE & CP & AL & MSE & CP & AL \\
\hline
50 & 0.0238 & 0.9570 & 0.6269 & 0.0143 & 0.9440 & 0.4848 & 0.0252 & 0.9510 & 0.6210\\
100 & 0.0127 & 0.9540 & 0.4459 & 0.0068 & 0.9580 & 0.3455 & 0.0134 & 0.9520 & 0.4487 \\
500 & 0.0025 & 0.9650 & 0.2133 & 0.0017 & 0.9570 & 0.1652 & 0.0029 & 0.9550  & 0.2160 \\
1,000 & 0.0012 & 0.9850 & 0.1628 & 0.0011 & 0.9490 & 0.1259 & 0.0017 & 0.9650 & 0.1649 \\
\hline
\end{tabular}
\end{table}
\begin{table}[h]
\centering
\caption{Performance of the proposed DPD-based posterior method for Normal and Poisson models under different numbers of particles (\(m\)) over 1,000 simulations. All experiments used a fixed sample size of \(n=500\).}
\label{tab:example2}
\begin{tabular}{|r|rrr|rrr|rrr|}
\hline
 & \multicolumn{6}{c|}{Normal} & \multicolumn{3}{c|}{Poisson} \\
\cmidrule{2-10}
 & \multicolumn{3}{c|}{\(\mu\)} & \multicolumn{3}{c|}{\(\sigma\)} & \multicolumn{3}{c|}{\(\lambda\)} \\
\cmidrule{2-10}
\(m\) & MSE & CP & AL & MSE & CP & AL & MSE & CP & AL \\
\hline
5 & 0.0025 & 0.9790 & 0.2310 & 0.0018 & 0.9700 & 0.1785 & 0.0030 & 0.9640 & 0.2334 \\
10 & 0.0025 & 0.9670 & 0.2132 & 0.0018 & 0.9490 & 0.1644 & 0.0029 & 0.9530 & 0.2161 \\
50 & 0.0024 & 0.9520 & 0.1987 & 0.0017 & 0.9390 & 0.1533 & 0.0029 & 0.9330  & 0.2013 \\
100 & 0.0024 & 0.9570 & 0.1972 & 0.0017 & 0.9300 & 0.1520 & 0.0032 & 0.9210  & 0.2001 \\
\hline
\end{tabular}
\end{table}
\begin{table}[h]
\centering
\caption{Performance of the proposed DPD-based posterior method for Normal and Poisson models under different robustness parameter (\(\alpha\)) over 1,000 simulations. All experiments used a fixed sample size of \(n=100\) and $m=10$ particles.}
\label{tab:example3}
\begin{tabular}{|r|rrr|rrr|rrr|}
\hline
 & \multicolumn{6}{c|}{Normal} & \multicolumn{3}{c|}{Poisson} \\
\cmidrule{2-10}
 & \multicolumn{3}{c|}{\(\mu\)} & \multicolumn{3}{c|}{\(\sigma\)} & \multicolumn{3}{c|}{\(\lambda\)} \\
\cmidrule{2-10}
\(\alpha\) & MSE & CP & AL & MSE & CP & AL & MSE & CP & AL \\
\hline
0   & 0.2602 & 0.1750 & 0.9383 & \(1.3\times10^{258}\) & 0.0000 & 1.5465 & 0.2198 & 0.2150 & 0.8884 \\
0.1 & 0.0129 & 0.9670 & 0.7082 & 0.0318 & 0.9610 & 1.6370 & 0.0300 & 0.8460 & 0.5111 \\
0.2 & 0.0106 & 0.9510 & 0.4159 & 0.0059 & 0.9510 & 0.3026 & 0.0149 & 0.9310 & 0.4478 \\
0.5 & 0.0129 & 0.9450 & 0.4450 & 0.0072 & 0.9470 & 0.3449 & 0.0128 & 0.9460 & 0.4487 \\
0.7 & 0.0146 & 0.9490 & 0.4680 & 0.0081 & 0.9590 & 0.3770 & 0.0141 & 0.9540 & 0.4630 \\
0.8 & 0.0131 & 0.9650 & 0.4808 & 0.0092 & 0.9490 & 0.3904 & 0.0150 & 0.9360 & 0.4691 \\
\hline
\end{tabular}
\end{table}

Table~\ref{tab:example1} summarizes the results of both models for over 1,000 independent simulation experiments with varying sample sizes. We report three key performance metrics: the mean squared error (MSE) of the posterior mean estimate, coverage probability (CP) of the 95\% credible interval, and average length (AL) of that interval. 

The results demonstrate that the proposed method performs well for different sample sizes. As the sample size increases from \(n=50\) to \(n=1,000\), both MSE and AL consistently improve (decrease) for all parameters. However, we observe that the coverage probability tends to exceed 95\% for larger sample sizes, particularly for the mean parameter of the normal model (\(\mu\)), which reaches 98.5\% coverage at \(n=1,000\). This suggests that the credible intervals produced by the proposed method are conservative in practice, particularly for larger datasets.

We also investigated the effect of varying the number of particles (\(m\)) used in the stochastic gradient algorithm. The results are presented in Table~\ref{tab:example2}. For these experiments, we fixed the sample size at \(n=500\) and varied \(m\) from 5 to 100. Interestingly, the MSE remains relatively stable across different values of \(m\) for all parameters, suggesting that even a small number of particles can provide reasonable point estimates.

However, we observe that the coverage probability tends to decrease as \(m\) increases, particularly for the Poisson model, where CP decreases from 96.4\% with \(m=5\) to 92.1\% with \(m=100\). Similarly, the average length of the credible intervals modestly decreases with an increase \(m\). These patterns suggest that while larger values of \(m\) may lead to more computationally intensive estimation, they also produce narrower credible intervals that are closer to the nominal 95\% coverage.

{
Table~\ref{tab:example3} reports the sensitivity to \(\alpha\). At \(\alpha=0\) (pure likelihood), outliers strongly affect the Normal model: coverage for \(\mu\) drops to 17.5\%, and the variance parameter shows severe inflation. This inflation is also partly due to numerical instability in the optimization at \(\alpha=0\). Making \(\alpha\) slightly positive (\(0.1\)–\(0.2\)) quickly stabilizes behavior: MSE decreases, coverage moves back toward 95\%, and average lengths shorten moderately. For larger \(\alpha\) (\(\ge 0.5\)), coverage remains around 94-96\% with slightly tighter intervals and a small efficiency trade-off. For the Poisson model, accuracy at \(\alpha=0.1\) is relatively poor (e.g., MSE \(=0.030\), CP \(=84.6\%\)); thus, \(\alpha=0.2\) or \(\alpha=0.5\) is recommended in practice.

When $\alpha>0$, the DPD-based posterior provides reliable estimation and uncertainty quantification under the considered contamination: MSE and AL decrease with $n$, CP stays near nominal, and modest particle counts (e.g., $m=5$–$10$) suffice, whereas $\alpha=0$ (the pure likelihood case) remains sensitive to outliers.
}

\subsection{GLMs}
Next, we illustrate the DPD-based posterior inference for GLMs. Extending our analysis beyond simple parametric models, we investigated the performance of our proposed approach in more complex regression settings that are widely used in practice.

Specifically, we focus on two canonical models within the exponential family framework of GLMs:
\begin{itemize}
    \item A linear regression model with normal errors, where we estimate both the regression coefficient vector \(\beta\) and the error variance \(\sigma^2\)
    \item A Poisson regression model for count data, where we estimate the regression coefficient vector \(\beta\) that relates covariates to the mean of the Poisson response
\end{itemize}

To implement our DPD-based posterior inference for both models, we used the maximum likelihood estimates from standard methods (ordinary least squares for linear regression and standard Poisson regression fitting) as the initial parameter values. 

For both models, we generated synthetic datasets with two covariates (\(x_1\) and \(x_2\)) drawn from a standard normal distribution. For the linear regression model, we used a contaminated distribution, where 95\% of the data follow \(N(0.6+0.3x_1+0.2x_2, 1)\) and 5\% follow \(N(10.6+0.3x_1+0.2x_2, 1)\), with true parameter values \(\beta_0 = 0.6\), \(\beta_1 = 0.3\), \(\beta_2 = 0.2\), and \(\sigma = 1\). For the Poisson regression model, we similarly used a contaminated distribution, where 95\% of the data follow \(\mathrm{Po}\bigl(\exp(0.6+0.3x_1+0.2x_2)\bigr)\) and 5\% follow \(\mathrm{Po}\bigl(10 + \exp(0.6+0.3x_1+0.2x_2)\bigr)\), with the same true regression coefficients.

We then applied our DPD-based posterior inference method and evaluated its performance across various sample sizes and particle counts for stochastic gradient estimation. Our primary goal was to assess how well our method could recover the true underlying parameters despite data contamination.

Table~\ref{tab:example_regression1} presents the results for both regression models across different sample sizes, while maintaining a fixed number of particles \(m=10\). We conducted 100 independent simulation replications for each configuration and reported the MSE of the posterior mean estimate, CP of the 95\% credible interval, and AL of that interval.

The results indicate that as the sample size increased from \(n=50\) to \(n=1,000\), the MSE consistently decreased for all parameters in both models, indicating improved estimation accuracy with larger samples. Similarly, AL of the credible intervals decreased with increasing sample size, reflecting greater precision in the posterior distributions. For the regression coefficient \(\beta\), CP remained relatively stable and close to the nominal 95\% level across different sample sizes in both models.

We further investigated the effect of varying the number of particles (\(m\)) used in the stochastic gradient algorithm, and the results are presented in Table~\ref{tab:example_regression2}. For these experiments, we fixed the sample size at \(n=100\) and varied \(m\) from 5 to 100.

\begin{table}[h]
\centering
\caption{Performance of the proposed DPD-based approach for linear and Poisson regressions under different sample sizes over 100 simulations. All experiments used $m=10$ particles.}
\label{tab:example_regression1}
\begin{tabular}{|r|rrr|rrr|rrr|}
\hline
 & \multicolumn{6}{c|}{Linear Regression} & \multicolumn{3}{c|}{Poisson Regression} \\
\cmidrule{2-10}
 & \multicolumn{3}{c|}{$\beta$} & \multicolumn{3}{c|}{$\sigma$} & \multicolumn{3}{c|}{$\beta$} \\
\cmidrule{2-10}
$n$ & MSE & CP & AL & MSE & CP & AL & MSE & CP & AL \\
\hline
50   & 0.0265  & 0.9433 & 0.6606 & 0.0223  & 0.9000 & 0.5337 & 0.0123  & 0.9367 & 0.4694 \\
100  & 0.0141  & 0.9433 & 0.4260 & 0.0091 & 0.9100 & 0.3477 & 0.0074 & 0.9600 & 0.3490 \\
500  & 0.0025 & 0.9367 & 0.1911 & 0.0012 & 0.9800 & 0.1505 & 0.0015 & 0.9500 & 0.1488 \\
1,000 & 0.0013 & 0.9567 & 0.1393 & 0.0010 & 0.9300 & 0.1073 & 0.0007& 0.9433 & 0.1023 \\
\hline
\end{tabular}
\end{table}

\begin{table}[h]
\centering
\caption{Performance of the proposed DPD-based approach for linear and Poisson regressions under different numbers of particles ($m$) over 100 simulations. All experiments used a fixed sample size of $n=100$.}
\label{tab:example_regression2}
\begin{tabular}{|r|rrr|rrr|rrr|}
\hline
 & \multicolumn{6}{c|}{Linear Regression} & \multicolumn{3}{c|}{Poisson Regression} \\
\cmidrule{2-10}
 & \multicolumn{3}{c|}{$\beta$} & \multicolumn{3}{c|}{$\sigma$} & \multicolumn{3}{c|}{$\beta$} \\
\cmidrule{2-10}
$m$ & MSE & CP & AL & MSE & CP & AL & MSE & CP & AL \\
\hline
5    & 0.0108  & 0.9667 & 0.4344 & 0.0077 & 0.9600 & 0.3564 & 0.0101  & 0.9533 & 0.3963 \\
10   & 0.0143  & 0.9567 & 0.4690 & 0.0077 & 0.9400 & 0.3562 & 0.0081 & 0.9567 & 0.3572 \\
50   & 0.0137  & 0.9400 & 0.4555 & 0.0078 & 0.9400 & 0.3472 & 0.0092 & 0.9400 & 0.3808 \\
100  & 0.0126  & 0.9700 & 0.4565 & 0.0063 & 0.9600 & 0.3485 & 0.0069 & 0.9333 & 0.3264 \\
\hline
\end{tabular}
\end{table}

\begin{table}[h]

\centering
\caption{Performance of the proposed DPD-based approach for linear and Poisson regressions under different robustness parameter ($\alpha$) over 100 simulations. All experiments used a fixed sample size of $n=100$ and $m=10$ particles.}
\label{tab:example_regression3}
\begin{tabular}{|r|rrr|rrr|rrr|}
\hline
 & \multicolumn{6}{c|}{Linear Regression} & \multicolumn{3}{c|}{Poisson Regression} \\
\cmidrule{2-10}
 & \multicolumn{3}{c|}{\(\beta\)} & \multicolumn{3}{c|}{\(\sigma\)} & \multicolumn{3}{c|}{\(\beta\)} \\
\cmidrule{2-10}
\(\alpha\) & MSE & CP & AL & MSE & CP & AL & MSE & CP & AL \\
\hline
0   & 0.1967  & 0.5233 & 0.9598 & 1.6672  & 0.0000 & 1.4721 & 0.0477 & 0.4667 & 0.3982 \\
0.1 & 0.0130 & 0.9567 & 0.5818 & 0.0234 & 1.0000 & 1.6892 & 0.0085 & 0.8933 & 0.3304 \\
0.2 & 0.0116 & 0.9433 & 0.4148 & 0.0051 & 0.9700 & 0.2952 & 0.0069 & 0.9433 & 0.3143 \\
0.5 & 0.0146 & 0.9567 & 0.4644 & 0.0072 & 0.9700 & 0.3563 & 0.0059 & 0.9300 & 0.2840 \\
0.7 & 0.0141 & 0.9433 & 0.4882 & 0.0095 & 0.9700 & 0.3924 & 0.0071 & 0.9167 & 0.2775 \\
0.8 & 0.0165 & 0.9500 & 0.4972 & 0.0122 & 0.9100 & 0.4345 & 0.0079 & 0.8933 & 0.2673 \\
\hline
\end{tabular}
\end{table}

Interestingly, the MSE and AL of the credible intervals remained relatively stable across the different values of \(m\) for both regression models. This suggests that our DPD-based inference method is robust to the choice of the particle count in the stochastic gradient estimation, with even a small number of particles (\(m=5\) or \(m=10\)) providing reliable point and interval estimates. The CPs show some fluctuation but generally remain close to the nominal 95\% level across different values of \(m\) for all parameters.

{Table~\ref{tab:example_regression3} further examines sensitivity to the robustness parameter \(\alpha\). At \(\alpha=0\) (pure likelihood), contamination markedly degrades performance: linear-regression variance is unstable and Poisson coverage is poor. Setting \(\alpha>0\) improves performance: both MSE and AL decrease, and CP moves toward the nominal level. For linear regression, the tested settings \(\alpha=0.2\) and \(\alpha=0.5\) provide a good balance between efficiency and robustness. Similarly, for Poisson regression, \(\alpha=0.2\) or \(\alpha=0.5\) achieves near-nominal coverage with small MSE. Overall, based on our evaluated grid, \(\alpha=0.2\) or \(\alpha=0.5\) are practical default choices for robust and efficient inference in these GLMs.

When $\alpha>0$, the DPD-based posterior extends well to the GLM settings studied: MSE and AL improve with larger $n$, CP remains close to nominal, and modest $m$ already performs well, while the pure likelihood case ($\alpha=0$) is vulnerable to contamination.
}

{
\section{Empirical robustness and posterior geometry: DPD–LLB versus baseline methods}\label{a4}
We present two complementary experiments that address robustness and uncertainty quantification from different angles. The first focuses on a univariate Normal location-scale model and contrasts our DPD-LLB posterior with the standard Bayesian posterior under $\varepsilon$-contamination. The second examines a bivariate mean model with known covariance and compares LLB-based sampling with MCMC sampling of the same DPD-based posterior, highlighting how the two sampling strategies agree in clean data and under contamination, including a direct visualisation of credible regions.

\medskip
\noindent\textbf{Experiment A: Univariate contamination and comparison with Standard Bayes.}
For each replication we generated $n=1,000$ observations from
\[
(1-\varepsilon)\,\mathcal N(0,1)\;+\;\varepsilon\,\mathcal N(10,0.1^2),\qquad \varepsilon=0.05,
\]
and fitted the working model $\mathcal N(\mu,\sigma)$ using two methods. For DPD-LLB we set $\alpha=0.5$ and used a stochastic-optimisation scheme with $m=100$ inner draws per iteration, $500$ iterations with scheduled step-size decay, and robust initialisation (median/MAD). For the standard Bayesian posterior we used the Gaussian likelihood with $p(\mu,\sigma)\propto 1/\sigma$ and a random-walk Metropolis-Hastings sampler (proposal s.d.\ $0.05$ on both coordinates, thinning by $50$ to retain $1,000$ draws). The experiment was repeated $20$ times with fresh data.

Kernel-density overlays of the per-replication posteriors (Figure~\ref{fig:normal_dpd_vs_bayes}) show a clear qualitative contrast. Across all replications, the DPD-LLB posterior concentrates near the uncontaminated targets ($\mu\approx0$, $\sigma\approx1$) with minimal between-replication variation; curves from different runs nearly coincide. In contrast, the standard Bayesian posteriors are systematically distorted by the $5\%$ outliers: the posterior for $\mu$ shifts positively and that for $\sigma$ inflates, with noticeable between-replication dispersion that reflects the random number of extreme draws. These patterns indicate that DPD-LLB anchors inference to the uncontaminated bulk of the data, whereas the pure likelihood is pulled toward the contaminating component.

\begin{figure}[H]
    \centering
    \includegraphics[width=0.8\textwidth]{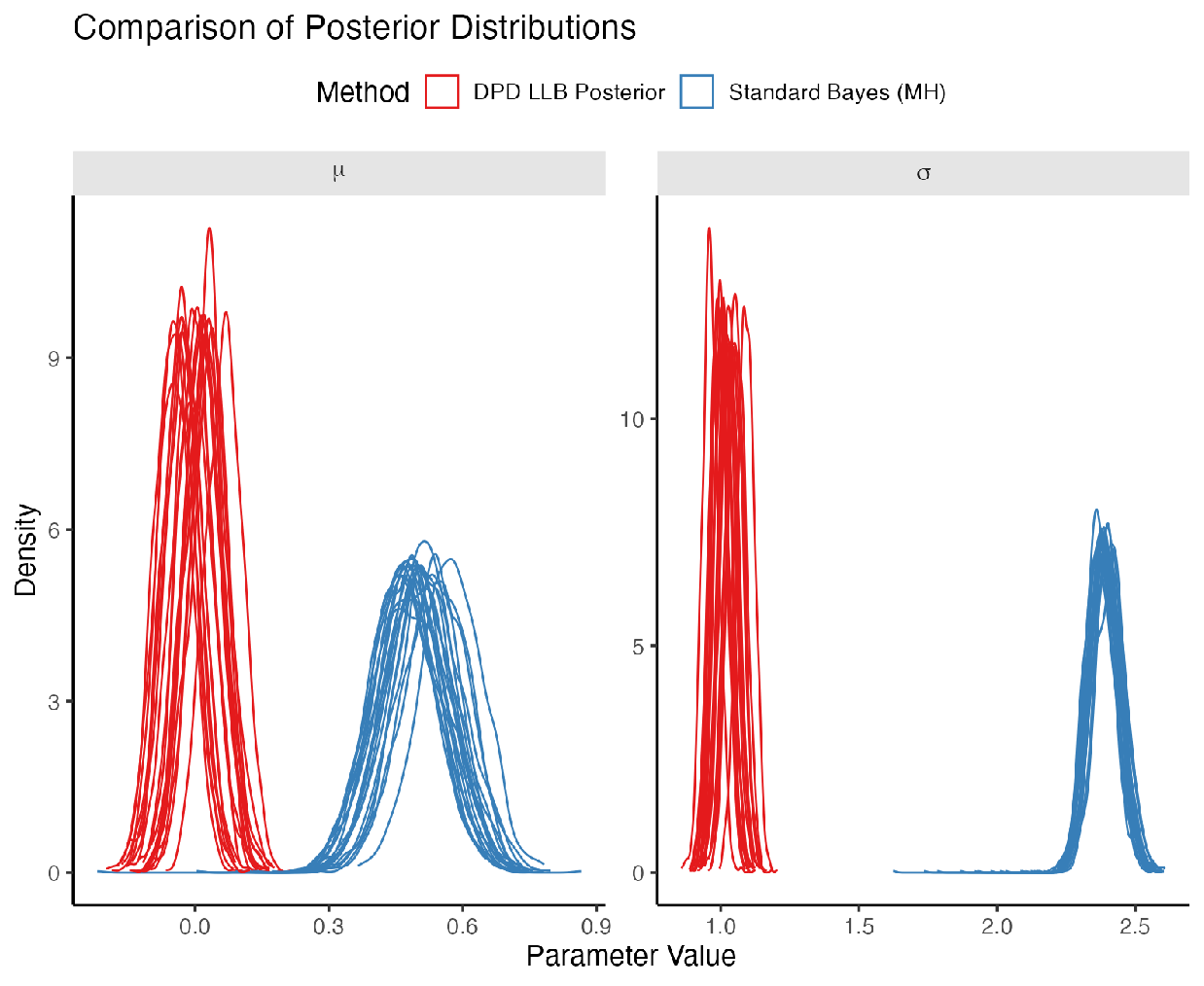}
    \vspace{-2.5mm}
    \caption{Posterior density overlays across $20$ replications for the contaminated Normal model ($n=1,000$, $\varepsilon=0.05$). DPD-LLB ($\alpha=0.5$, $m=100$; red) remains centred near $\mu\approx0$ and $\sigma\approx1$ with little between-replication variation, while the standard Bayesian posterior (MH; blue) is shifted for $\mu$ and inflated for $\sigma$.}
    \label{fig:normal_dpd_vs_bayes}
\end{figure}

\medskip
\noindent\textbf{Experiment B: Two-dimensional credible regions and DPD-based posterior sampling.}
Motivated by the reviewer’s suggestion to enhance empirical analysis, we examined whether LLB-based sampling and MCMC sampling yield similar DPD-based posteriors and credible regions in a two-dimensional setting. We considered a Normal mean model with known covariance $\Sigma=I_2$ and target $\mu=(\mu_1,\mu_2)$, with DPD tuning fixed at $\alpha=0.5$. For LLB (``LLB with SGD'') we used $m=20$ inner draws per iteration, $400$ iterations with scheduled step-size decay, robust median initialisation, and drew $3{,}000$ samples. For the MCMC baseline we sampled the same DPD-based posterior using a Gaussian random-walk MH (step size $0.12$), a burn-in of $1{,}500$, thinning by $50$, and retained $3{,}000$ draws; the DPD weight $w$ was calibrated from numerical $I$– and $J$–matrices around a short LLB pilot. We examined two regimes with $n=800$: a clean sample ($0\%$ outliers) and an $\varepsilon$-contaminated sample ($5\%$ from $N((6,6),0.1^2 I_2)$ with the remaining $95\%$ from $N((0,0),I_2)$).

Figure~\ref{fig:llb_mh_2d} displays the $50\%$ and $95\%$ credible ellipses together with posterior means and the true centre. In the clean case, LLB and MH produce nearly indistinguishable centres and very similar ellipses, indicating that LLB-based sampling accurately reproduces the DPD-based posterior geometry delivered by MCMC when the model is well specified. Under $5\%$ contamination, both procedures remain centred close to $(0,0)$ and their credible regions expand sensibly without drifting toward the outliers; shapes and orientations of the ellipses are again similar. Small scale differences are consistent with the stochastic-gradient approximation and the data-driven calibration of $w$.

\begin{figure}[H]
  \centering
  \includegraphics[width=\linewidth]{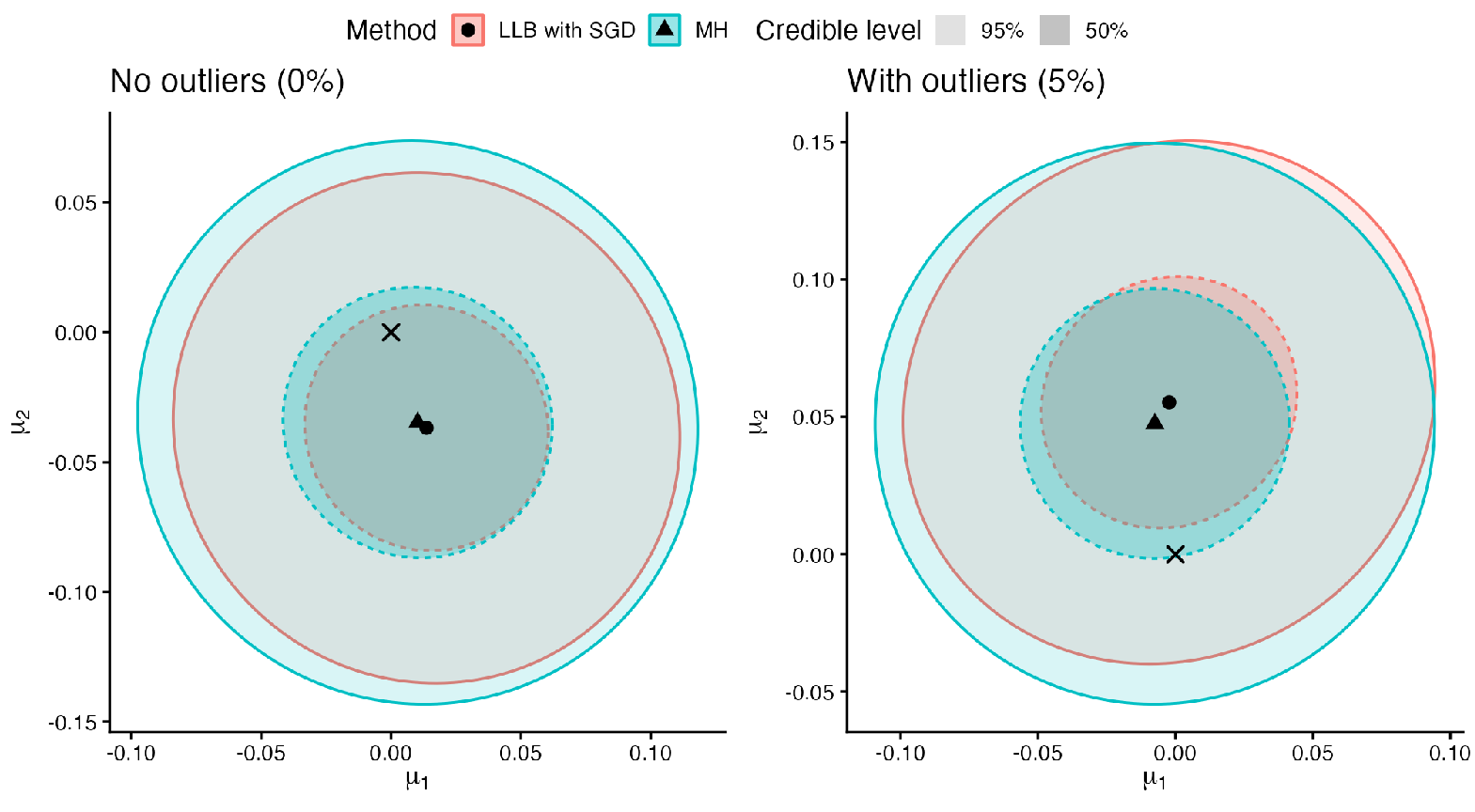}
  \caption{Two-dimensional DPD-based posterior for the mean under clean data (left) and $5\%$ contamination (right) with $n=800$ and $\alpha=0.5$. Shaded regions show $50\%$ and $95\%$ credible ellipses for LLB with SGD and MH; markers indicate posterior means and the true centre $(0,0)$.}
  \label{fig:llb_mh_2d}
\end{figure}

\medskip
\noindent\textbf{Joint conclusions.}
Taken together, the univariate and bivariate results lead to three empirical conclusions. First, DPD-LLB yields robust inference under moderate contamination, keeping posteriors well centred with controlled spread, whereas the standard likelihood-based posterior is materially distorted. Second, LLB-based sampling and MCMC sampling of the same DPD-based posterior agree closely in centre, shape, and orientation, both in clean data and with outliers, so the stochastic-gradient construction preserves the posterior geometry of interest. Third, any residual differences between LLB and MH are small relative to overall posterior uncertainty and do not affect qualitative conclusions about robustness or the calibration of credible regions.
}
\end{appendix}

\end{document}